\let\c@definition\c@theorem
\let\c@lemma\c@theorem
\let\c@corollary\c@theorem
\let\c@remark\c@theorem
\let\c@example\c@theorem
\let\c@proposition\c@theorem
\author{Thomas~Brihaye\inst{1} 
  \and Julie~De~Pril\inst{1}
  \and Sven~Schewe\inst{2}
}
 \institute{University of Mons - UMONS\\
   Place du Parc 20, 7000 Mons, Belgium\\
   \mail
   \and
   University of Liverpool\\
   \url{sven.schewe@liverpool.ac.uk}
   } 
 \title{Multiplayer Cost Games with\\Simple Nash Equilibria}
\begin{document}
\maketitle

\begin{abstract}
Multiplayer games with selfish agents naturally occur in the design of
distributed and embedded systems.  As the goals of selfish agents are
usually neither equivalent nor antagonistic to each other, such games
are non zero-sum games.  We study such games and show that a large
class of these games, including games where the individual objectives
are mean- or discounted-payoff, or quantitative reachability, and show
that they do not only have a solution, but a \emph{simple} solution.
We establish the existence of Nash equilibria that are composed of $k$
memoryless strategies for each agent in a setting with $k$ agents, one
main and $k-1$ minor strategies.  The main strategy describes what
happens when all agents comply, whereas the minor strategies ensure
that all other agents immediately start to co-operate against the
agent who first deviates from the plan.  This simplicity is important,
as rational agents are an idealisation.  Realistically, agents have to
decide on their moves with very limited resources, and complicated
strategies that require exponential---or even
non-elementary---implementations cannot realistically be implemented.
The existence of simple strategies that we prove in this paper
therefore holds a promise of implementability.
\end{abstract}

\section{Introduction}


The construction of correct and efficient computer systems (both hard-
and software) is recognised to be an extremely difficult task.  Formal
methods have been exploited with some success in the design and
verification of such systems.  Mathematical logic, automata
theory~\cite{HU}, and model-checking~\cite{CGP00} have contributed
much to the success of formal methods in this field.  However,
traditional approaches aim at systems with qualitative specifications
like LTL, and rely on the fact that these specifications are either
satisfied or violated by the system.

Unfortunately, these techniques do not trivially extend to complex
systems, such as embedded or distributed systems.  A main reason for
this is that such systems often consist of multiple independent
components with individual objectives.  These components can be viewed
as selfish agents that may cooperate and compete at the same time. It
is difficult to model the interplay between these components with
traditional finite state machines, as they cannot reflect the
intricate quantitative valuation of an agent on how well he has met
his goal.  In particular, it is not realistic to assume that these
components are always cooperating to satisfy a common goal, as it is,
e.g., assumed in works that distinguish between an environment and a
system.  We argue that it is more realistic to assume that all
components act like selfish agents that try to achieve their own
objectives and are either unconcerned about the effect this has on the
other components or consider this effect to be secondary.  It is
indeed a recent trend to enhance the system models used in the
classical approach of verification by quantitative cost and gain
functions, and to exploit the well established game-theoretic
framework~\cite{Na50,OR94} for their formal analysis.

The first steps towards the extension of computational models with
concepts from classical game theory were taken by advancing from
boolean to general two-player zero-sum games played on
graphs~\cite{GTW02}.  Like their qualitative counter parts, those
games are adequate to model controller-environment interaction
problems~\cite{T95,Th08}.  As usual in control theory, one can
distinguish between moves of a control player, who plays actions to
control a system to meet a control objective, and an antagonistic
environment player.  In the classical setting, the control player has
a qualitative objective---he might, for example, try to enforce a
temporal specification---whereas the environment tries to prevent
this.  In the extension to quantitative games, the controller instead
tries to maximise its gain, while the environment tries to minimise
it.  This extension lifts the controller synthesis problem from a
constructive extension of a decision problem to a classical
optimisation problem.

However, this extension has not lifted the restriction to purely antagonist interactions between a controller and a hostile environment.
In order to study more complex systems with more than two components, and with objectives that are not necessarily antagonist, we resort to multiplayer non zero-sum games.
In this context, \emph{Nash equilibria}~\cite{Na50} take the place that winning and optimal strategies take in qualitative and quantitative two-player games zero-sum games, respectively.
Surprisingly, qualitative objectives have so far prevailed in the study of Nash equilibria for distributed systems.
However, we argue that Nash equilibria for selfish agents with quantitative objectives---such as reaching a set of target states quickly or with a minimal consumption of energy---are natural objectives that aught to be studied alongside (or instead of) traditional qualitative objectives.

Consequently, we study \emph{Nash equilibria} for \emph{multiplayer non zero-sum} games played on graphs with \emph{quantitative} objectives.

\smallskip
\emph{Our contribution.}\ 
In this paper, we study turn-based multiplayer non zero-sum games
played on finite graphs with quantitative objectives, expressed
through a cost function for each player (\emph{cost games}).  Each
cost function assigns, for every play of the game, a value that
represents the cost that is incurred for a player by this play.  Cost
functions allow to express classical quantitative objectives such as
\emph{quantitative reachability} (i.e., the player aims at reaching a
subset of states as soon as possible), or \emph{mean-payoff}
objectives.  In this framework, all players are supposed to be
rational: they want to minimise their own cost or, equivalently,
maximise their own gain.  This invites the use of Nash equilibria as
the adequate concept for cost games.


Our results are twofold.  Firstly, we prove the \emph{existence} of
Nash equilibria for a large class of cost games that includes
quantitative reachability and mean-payoff objectives.  Secondly, we
study the complexity of these Nash equilibria in terms of the
\emph{memory} needed in the strategies of the individual players in
these Nash equilibria.  More precisely, we ensure existence of Nash
equilibria whose strategies only requires a number of memory states
that is \emph{linear} in the size of the game for a wide class of cost
games, including games with quantitative reachability and mean-payoff
objectives.


The general philosophy of our work is as follows: we try to derive
existence of Nash equilibria in multiplayer non zero-sum quantitative
games (and characterization of their complexity) through determinacy
results (and characterization of the optimal strategies) of several
well-chosen two-player quantitative games derived from the multiplayer
game.  These ideas were already successfully exploited in the
qualitative framework~\cite{GU08}, and in the case of limit-average
objectives~\cite{TT98}.

\smallskip
\emph{Related work.}\ Several recent papers have considered
\emph{two-player zero-sum} games played on finite graphs with regular
objectives enriched by some \emph{quantitative} aspects. Let us
mention some of them: games with finitary objectives~\cite{CH06},
mean-payoff parity games~\cite{CHJ05}, games with prioritised
requirements~\cite{AKW08}, request-response games where the waiting
times between the requests and the responses are
minimized~\cite{HTW08,Z09}, games whose winning conditions are
expressed via quantitative languages~\cite{BCHJ09}, and recently,
cost-parity and cost-Streett games~\cite{FZ12}.

Other work concerns \emph{qualitative non zero-sum} games.
In~\cite{GU08}, general criteria ensuring existence of Nash equilibria
and subgame perfect equilibria (resp.\ secure equilibria) are provided
for multiplayer (resp.\ $2$-player) games, as well as complexity
results.  The complexity of Nash equilibria in multiplayer concurrent
games with B\"uchi objectives has been discussed in~\cite{BBMU11}.
\cite{BBM10} studies the existence of Nash equilibria for timed games
with qualitative reachability objectives

Finally, there is a series of recent results on the combination of
\emph{non zero-sum} aspects with \emph{quantitative objectives}.
In~\cite{BG09}, the authors study games played on graphs with terminal
vertices where quantitative payoffs are assigned to the players.
In~\cite{KLST12}, 
the authors provide an algorithm to
decide the existence of Nash equilibria for concurrent priced games with quantitative reachability objectives.
In \cite{PS09}, the authors prove existence of a Nash equilibrium in
Muller games on finite graphs where players have a preference ordering
on the sets of the Muller table.  
Let us also notice that the existence of a Nash equilibrium in cost
games with quantitative reachability objectives we study in this paper
has already been established in \cite{BBD10}.  The new proves we
provide are simpler and significantly improve the complexity of the
strategies constructed from exponential to linear in the size of the
game.

\smallskip
\emph{Organization of the paper.}\
In Section~\ref{sec:back}, we present the model of multiplayer cost
games and define the problems we study.  The main results are given in
Section~\ref{sec:results}. Finally, in Section~\ref{sec:appl}, we
apply our general result on particular cost games with classical
objectives. Omitted proofs and additional materials can be found in
the Appendix.

\section{General Background}\label{sec:back}
In this section, we define our model of \emph{multiplayer cost game},
recall the concept of Nash equilibrium and state the problems we
study.

\begin{definition}\label{def:cost game}
  A \emph{multiplayer cost game} is a tuple $\mathcal{G} =
  (\Pi, V, (V_i)_{i \in \Pi}, E, (\cost_i)_{i \in \Pi})$ where
  \begin{itemize}
  \item[\textbullet] $\Pi$ is a finite set of \emph{players},
  \item[\textbullet] $G = (V,E)$ is a \emph{finite
    directed graph} with vertices $V$ and edges $E
    \subseteq V \times V$,
  \item[\textbullet] $(V_i)_{i \in \Pi}$ is a partition of $V$ such that
    $V_i$ is the set of vertices controlled by player~$i$, and

  \item[\textbullet] $\cost_i: \plays \to \IR \cup
    \{+\infty,-\infty\}$ is the \emph{cost function} of player~$i$,
    where $\plays$ is the set of \emph{plays} in $\mathcal{G}$,
    i.e. the set of infinite paths through $G$. For every play~$\rho
    \in \plays$, the value $\cost_i(\rho)$ represents the amount that
    player~$i$ loses for this play.
  \end{itemize}
\end{definition}
Cost games are \emph{multiplayer turn-based quantitative non zero-sum}
games. We assume that the players are rational: they play in a way to
minimise their own~cost.

Note that minimising cost or maximising gain are
essentially\footnote{Sometimes the translation implies minor follow-up
  changes, e.g., the replacement of $\liminf$ by $\limsup$ and vice
  versa.} equivalent, as maximising the gain for player $i$ can be
modelled by using $\cost_i$ to be minus this gain and then minimising
the cost.  This is particularly important in cases where two players
have antagonistic goals, as it is the case in all two-player zero-sum
games.  To cover these cases without changing the setting, we
sometimes refer to maximisation in order to preserve the connection to
such games in the literature.


For the sake of simplicity, we assume that each vertex has at least
one outgoing edge. Moreover, it is sometimes convenient to specify an
initial vertex $v_0 \in V$ of the game. We then call the pair
$(\mathcal{G},v_0)$ an \emph{initialised multiplayer cost game}. This
game is played as follows.  First, a token is placed on the initial
vertex~$v_0$.  Whenever a token is on a vertex $v \in V_i$ controlled
by player~$i$, player~$i$ chooses one of the outgoing edges $(v,v')\in
E$ and moves the token along this edge to $v'$.  This way, the players
together determine an \emph{infinite} path through the graph $G$,
which we call a \emph{play}. Let us remind that $\plays$ is the set of
all plays in $\mathcal{G}$.

A \emph{history}~$h$ of $\mathcal{G}$ is a \emph{finite} path through
the graph~$G$. We denote by $\hist$ the set of histories of a game,
and by $\epsilon$ the empty history. In the sequel, we write
$h=h_0\ldots h_k$, where $h_0,\ldots,h_k \in V$ ($k \in \IN$), for a
history~$h$, and similarly, $\rho=\rho_0\rho_1\ldots$, where
$\rho_0,\rho_1,\ldots \in V$, for a play~$\rho$.  A \emph{prefix} of
length $n+1$ (for some $n \in \IN$) of a
play~$\rho=\rho_0\rho_1\ldots$ is the finite history $\rho_0 \ldots
\rho_n$. We denote this history by $\rho[0,n]$.

Given a history~$h=h_0\ldots h_k$ and a vertex $v$ such that $(h_k,v)
\in E$, we denote by $hv$ the history $h_0\ldots h_kv$. Moreover,
given a history~$h=h_0\ldots h_k$ and a play~$\rho=\rho_0\rho_1\ldots$
such that $(h_k,\rho_0) \in E$, we denote by $h\rho$ the play
$h_0\ldots h_k \rho_0\rho_1\ldots$.

The function~$\last$ (resp.\ $\first$) returns, for a given history~$h=
h_0\ldots h_k$, the last vertex~$h_k$ (resp.\ the first vertex~$h_0$)
of $h$. The function $\first$ naturally extends to plays.

A \emph{strategy} of player~$i$ in $\mathcal{G}$ is a
function~$\sigma: \hist \to V$ assigning to each history~$h \in \hist$
that ends in a vertex $\last(h) \in V_i$ controlled by player $i$, a
successor $v=\sigma(h)$ of $\last(h)$. That is,
$\big(\last(h),\sigma(h)\big) \in E$.  We say that a
play~$\rho=\rho_0\rho_1\ldots$ of $\mathcal{G}$ is \emph{consistent}
with a strategy~$\sigma$ of player~$i$ if
$\rho_{k+1}=\sigma(\rho_0\ldots\rho_k)$ for all $k \in \IN$ such that
$\rho_k \in V_i$.  A \emph{strategy profile} of $\mathcal{G}$ is a
tuple~$(\sigma_i)_{\ipi}$ of strategies, where $\sigma_i$ refers to a
strategy for player~$i$.  Given an initial vertex $v$, a strategy
profile determines the unique play of $(\mathcal{G},v)$ that is
consistent with all strategies~$\sigma_i$.  This play is called the
\emph{outcome} of $(\sigma_i)_{\ipi}$ and denoted by $\langle
(\sigma_i)_{\ipi} \rangle_v$.  We say that a player \emph{deviates}
from a strategy (resp. from a play) if he does not carefully follow
this strategy (resp. this play).


A \emph{finite strategy automaton} for player $i \in \Pi$ over a game
$\mathcal{G}=(\Pi, V, (V_i)_{i \in \Pi},\linebreak E, (\cost_i)_{i \in
  \Pi})$ is a Mealy automaton $\mathcal{A}_i=(M,m_0,V,\delta,\nu)$
where:
\begin{itemize}
\item $M$ is a non-empty, finite set of memory states,
\item $m_0 \in M$ is the initial memory state,
\item $\delta: M \times V \to M$ is the memory update function,
\item $\nu: M \times V_i \to V$ is the transition choice function,
  such that $(v,\nu(m,v)) \in E$ for all $m \in M$ and $v \in V_i$.
\end{itemize}
We can extend the memory update function $\delta$ to a function
$\delta^*: M \times \hist \to M$ defined by $\delta^*(m,\epsilon)=m$
and $\delta^*(m,hv)=\delta(\delta^*(m,h),v)$ for all $m \in M$ and $hv
\in \hist$.  The strategy $\sigma_{\mathcal{A}_i}$ computed by a
finite strategy automaton ${\mathcal{A}_i}$ is defined by
$\sigma_{\mathcal{A}_i}(hv)= \nu(\delta^*(m_0,h),v)$ for all $hv \in
\hist$ such that $v \in V_i$. We say that $\sigma$ is a
\emph{finite-memory strategy} if there exists\footnote{Note that there
  exist several finite strategy automata such that
  $\sigma=\sigma_{\mathcal{A}}$.} a finite strategy automaton
${\mathcal{A}}$ such that $\sigma=\sigma_{\mathcal{A}}$. Moreover, we
say that $\sigma=\sigma_{\mathcal{A}}$ has a memory of size at most
$|M|$, where $|M|$ is the number of states of $\mathcal{A}$. In
particular, if $|M|=1$, we say that $\sigma$ is a \emph{positional
  strategy} (the current vertex of the play determines the choice of
the next vertex). We call $(\sigma_i)_{i \in \Pi}$ a strategy profile
with memory $m$ if for all $i \in \Pi$, the strategy $\sigma_i$ has a
memory of size at most $m$. A strategy profile~$(\sigma_i)_{\ipi}$ is
called \emph{positional} or \emph{finite-memory} if each~$\sigma_i$ is
a positional or a finite-memory strategy, respectively.

We now define the notion of \emph{Nash equilibria} in this
quantitative framework.

\begin{definition}\label{def:ne}
  Given an initialised multiplayer cost game $(\mathcal{G},v_0)$, a
  strategy profile~$(\sigma_i)_{\ipi}$ is a \emph{Nash equilibrium}
  in $(\mathcal{G},v_0)$ if, for every player~$j \in \Pi$ and for every
  strategy~$\sigma_j'$ of player~$j$, we have:
  $$\cost_j(\rho) \leq \cost_j(\rho')$$ where $\rho = \langle
  (\sigma_i)_{\ipi} \rangle_{v_0}$ and $\rho' = \langle
  \sigma_j',\sigma_{i \in \Pi \setminus \{j\}} \rangle_{v_0}$.
\end{definition}
This definition means that, for all~$j \in \Pi$, player~$j$ has no
incentive to deviate from $\sigma_j$ since he cannot strictly decrease
his cost when using~$\sigma_j'$ instead of~$\sigma_j$. Keeping
notations of Definition~\ref{def:ne} in mind, a strategy~$\sigma_j'$
such that $\cost_j(\rho) > \cost_j(\rho')$ is called a
\emph{profitable deviation} for player~$j$ w.r.t.~$(\sigma_i)_{\ipi}$.


\begin{example}\label{ex:cost game}
Let $\mathcal{G} = (\Pi, V, V_1,V_2, E, \cost_1,\cost_2)$ be the
two-player cost game whose graph $G=(V,E)$ is depicted in
Figure~\ref{fig-ex}. The states of player~$1$ (resp. $2$) are
represented by circles (resp. squares)\footnote{We will keep this
  convention through the paper.}. Thus, according to
Figure~\ref{fig-ex}, $V_1 = \{A,C,D\}$ and $V_2=\{B\}$. In order to
define the cost functions of both players, we consider a price
function $\price: E \to \{1,2,3\}$, which assigns a price to each edge
of the graph. The price function\footnote{Note that we could have
  defined a different price function for each player. In this case,
  the edges of the graph would have been labelled by couples of
  numbers.}  $\price$ is as follows (see the numbers in
Figure~\ref{fig-ex}): $\price(A,B)=\price(B,A)= \price(B,C)=1$,
$\price(A,D)=2$ and $\price(C,B)=\price(D,B)=3$. The cost function
$\cost_1$ of player~$1$ expresses a \emph{quantitative reachability
  objective}: he wants to reach the vertex $C$ (shaded vertex) while
minimising the sum of prices up to this vertex. That is, for every
play $\rho=\rho_0\rho_1\ldots$ of $\mathcal{G}$:
\[  \cost_1(\rho) = \left\{
\begin{array}{ll}
  \sum_{i=1}^n \price(\rho_{i-1},\rho_i) & \mbox{
    if $n$ is the \emph{least} index s.t. $\rho_n =C$,}\\ +
  \infty & \mbox{ otherwise.}
\end{array}\right. \]
As for the cost function $\cost_2$ of player~$2$, it expresses a
\emph{mean-payoff objective}: the cost of a play is the long-run
average of the prices that appear along this play. Formally, for any
play $\rho=\rho_0\rho_1\ldots$ of $\mathcal{G}$:
$$\cost_2(\rho)=\displaystyle \limsup_{n \to +\infty} \frac{1}{n}\cdot
\sum_{i=1}^n \price(\rho_{i-1},\rho_i).$$ Each player aims at
minimising the cost incurred by the play. Let us insist on the fact
that the players of a cost game may have different kinds of cost
functions (as in this example).

  \begin{figure}[h!]
    \centering
    \begin{tikzpicture}[xscale=1,yscale=.5,>=stealth',shorten >=1pt]
      \everymath{\scriptstyle}

      \path (0,0) node[draw,circle,inner sep=2pt] (qA) {$A$};
      \path (2.5,0) node[draw,rectangle,inner sep=3.5pt] (qB) {$B$};
      \path (5,0) node[draw,circle,inner sep=2pt,fill=black!20!white] 
      (qC) {$C$};
      \path (1.25,-1.75) node[draw,circle,inner sep=2pt] (qD) {$D$};
      
      \path (qA) edge[->,bend left] node[pos=.5,above] {{$1$}} (qB);
      \path (qB) edge[->,bend left] node[pos=.5,below] {{$1$}} (qA);

      \path (qB) edge[->,bend left] node[pos=.5,above] {{$1$}} (qC);
      \path (qC) edge[->,bend left] node[pos=.5,below] {{$3$}} (qB);

      \path (qA) edge[->,bend right] node[pos=.5,below] {{$2$}} (qD);
      \path (qD) edge[->,bend right] node[pos=.5,below] {{$3$}} (qB);

    \end{tikzpicture} \caption{A two-player cost game $\mathcal{G}$.}  
    \label{fig-ex} 
  \end{figure}
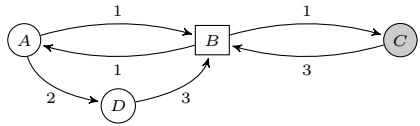
  
  An example of a play in~$\mathcal{G}$ can be given by $\rho =
  (AB)^\omega$, leading to the costs $\cost_1(\rho)= +\infty$ and
  $\cost_2(\rho)= 1$. In the same way, the play $\rho'=A(BC)^\omega$
  induces the following costs: $\cost_1(\rho)= 2$ and $\cost_2(\rho)=
  2$. 
  
  Let us fix the initial vertex~$v_0$ at the vertex~$A$. The
  play~$\rho = (AB)^\omega$ is the outcome of the positional
  strategy\footnote{Note that player~1 has no choice in vertices~$C$
    and~$D$, that is, $\sigma_1(hv)$ is necessarily equal to~$B$
    for~$v \in \{C,D\}$ and $h\in \hist$.}
  profile~$(\sigma_1,\sigma_2)$ where $\sigma_1(A)=B$ and
  $\sigma_2(B)=A$. Moreover, this strategy profile is in fact a
  \emph{Nash equilibrium}: player~$2$ gets the least cost he can
  expect in this game, and player~$1$ has no incentive to choose the
  edge $(A,D)$ (it does not allow the play to pass through
  vertex~$C$).

  We now consider the positional strategy profile $(\sigma_1',
  \sigma_2')$ with $\sigma_1'(A)=B$ and $\sigma_2'(B)=C$. Its outcome
  is the play $\rho'=A(BC)^\omega$. However, this strategy profile is
  \emph{not} a Nash equilibrium, because player~$2$ can strictly lower
  his cost by always choosing the edge $(B,A)$ instead of $(B,C)$,
  thus lowering his cost from 2 to 1. In other words, the strategy
  $\sigma_2$ (defined before) is a \emph{profitable deviation} for
  player~$2$ w.r.t.~$(\sigma_1',\sigma_2')$.

\end{example}

The questions studied in this paper are the following ones:
\begin{pbm}\label{pbm 1}
  Given a multiplayer cost game $\mathcal{G}$, does there exist a Nash
  equilibrium in~$\mathcal{G}$?
\end{pbm}

\begin{pbm}\label{pbm 2}
  Given a multiplayer cost game $\mathcal{G}$, does there exist a
  finite-memory Nash equilibrium in~$\mathcal{G}$?
\end{pbm}

Obviously enough, if we make no restrictions on our cost games, the
answer to Problem~\ref{pbm 1} (and thus to Problem~\ref{pbm 2}) is
negative (see Example~\ref{NoNE}). Our first goal in this paper is to
identify a large class of cost games for which the answer to
Problem~\ref{pbm 1} is positive. Then we also positively reply to
Problem~\ref{pbm 2} for subclasses of the previously identified class
of cost games. Both results can be found in Section~\ref{sec:results}.

\begin{example}\label{NoNE}
Let $(\mathcal{G},A)$ be the initialised one-player cost game depicted
below, whose cost function $\cost_1$ is defined by
$\cost_1(A^nB^{\omega}) = \frac{1}{n}$ for $n \in \IN_0$ and
$\cost_1(A^\omega) = + \infty$. One can be convinced that there is no
Nash equilibrium in this initialised game.

    \centering
    \begin{tikzpicture}[xscale=1,yscale=1,>=stealth',shorten >=1pt]
      \everymath{\scriptstyle}

      \path (0,0) node[draw,circle,inner sep=2pt] (qA) {$A$};
      \path (2.5,0) node[draw,circle,inner sep=2pt] (qB) {$B$};
     
      \path (qA) edge[loop above] node[pos=.5,above] {{}} (qA);
      \path (qB) edge[loop above] node[pos=.5,above] {{}} (qB);
      \path (qA) edge[->] node[pos=.5,above] {{}} (qB);

    \end{tikzpicture}

\end{example}

In order to our class of cost games, we need the notions of
\emph{Min-Max cost games}, \emph{determinacy} and \emph{optimal
  strategies}. The following two definitions are inspired
by~\cite{trivedi}.

\begin{definition}\label{def:min-max cost game}
  A \emph{Min-Max cost game} is a tuple $\mathcal{G} = (V,
  \VMin,\VMax,E,\costMin,\costMax)$, where
  \begin{itemize}
  \item[\textbullet] $G = (V,E)$ is a \emph{finite
    directed graph} with vertices $V$ and edges $E
    \subseteq V \times V$,
  \item[\textbullet] $(\VMin,\VMax)$ is a partition of $V$ such that
    $\VMin$ (resp. $\VMax$) is the set of vertices controlled by
    player~Min (resp. Max), and
  \item[\textbullet] $\costMin: \plays \to \IR \cup
    \{+\infty,-\infty\}$ is the \emph{cost function} of player~$\Min$,
    that represents the amount that he loses for a play, and
    $\costMax: \plays \to \IR \cup \{+\infty,-\infty\}$ is the
    \emph{gain function} of player~$\Max$, that represents the amount
    that he wins for a play.
  \end{itemize}
\end{definition}

In such a game, player $\Min$ wants to \emph{minimise} his cost, while
player $\Max$ wants to \emph{maximise} his gain. So, a Min-Max cost
game is a particular case of a two-player cost game. Let us stress
that, according to this definition, a Min-Max cost game is
\emph{zero-sum} if $\costMin=\costMax$, but this might not always be
the case\footnote{For an example, see the average-price game in
  Definition~\ref{def:part cost games}.}. We also point out that
Definition~\ref{def:min-max cost game} allows to take completely
unrelated functions $\costMin$ and $\costMax$, but usually they are
similar (see Definition~\ref{def:part cost games}). In the
sequel, we denote by $\SigmaMin$ (resp. $\SigmaMax$) the set of
strategies of player~$\Min$ (resp. $\Max$) in a Min-Max cost game.

\begin{definition}\label{def:determinacy}
  Given a Min-Max cost game $\mathcal{G}$, we define for every vertex
  $v \in V$ the \emph{upper value} $\valup(v)$ as:
  $$\valup(v)= \displaystyle \inf_{\sigma_1 \in \SigmaMin}
  \sup_{\sigma_2 \in \SigmaMax} \costMin(\langle \sigma_1,\sigma_2
  \rangle_v)\,,$$ and the \emph{lower value} $\vallow(v)$ as:
  $$\vallow(v)= \displaystyle \sup_{\sigma_2 \in \SigmaMax}
  \inf_{\sigma_1 \in \SigmaMin} \costMax(\langle \sigma_1,\sigma_2
  \rangle_v)\,.$$

  The game $\mathcal{G}$ is \emph{determined} if, for every $v \in V$,
  we have $\valup(v)=\vallow(v)$. In this case, we say that the game
  $\mathcal{G}$ has a \emph{value}, and for every $v \in V$, $\val(v)=
  \valup(v)= \vallow(v)$. We also say that the strategies $\sigmaopt_1
  \in \SigmaMin$ and $\sigmaopt_2 \in \SigmaMax$ are \emph{optimal
    strategies} for the respective players if, for every $v \in V$, we
  have that
  $$\inf_{\sigma_1 \in \SigmaMin} \costMax(\langle \sigma_1,\sigmaopt_2
  \rangle_v)= \val(v) =\sup_{\sigma_2 \in \SigmaMax} \costMin(\langle
  \sigmaopt_1,\sigma_2 \rangle_v) \,.$$
\end{definition}
If $\sigmaopt_1$ is an optimal strategy for player $\Min$, then he
loses at most $\val(v)$ when playing according to it. On the other
hand, player $\Max$ wins at least $\val(v)$ if he plays according to
an optimal strategy $\sigmaopt_2$ for him.

Examples of classical determined Min-Max cost games can be found
in Section~\ref{sec:appl}.

\section{Results}\label{sec:results}

In this section, we first define a large class of cost games for which
Problem~\ref{pbm 1} can be answered positively
(Theorem~\ref{theo:exists EN}). Then, we study existence of simple Nash
equilibria (Theorems~\ref{theo:mem lin} and~\ref{theo:exists EN
  gen}). To define this interesting class of cost games, we need
the concepts of \emph{\propcost} and \emph{coalition-determined} cost
games.

\begin{definition}\label{def:prefix}
  A multiplayer cost game $\mathcal{G} = (\Pi, V, (V_i)_{i \in \Pi},
  E, (\cost_i)_{i \in \Pi})$ is \emph{\propcost}\ if, for every
  player $i \in \Pi$, every vertex $v \in V$ and history $hv \in
  \hist$, there exists $a \in \IR$ and $b \in \IR^+$ such that, for
  every play $\rho \in \plays$ with $\first(\rho)=v$, we
  have: $$\cost_i(h\rho)= a + b\cdot\cost_i(\rho)\,.$$
\end{definition}

Let us now define the concept of \emph{coalition-determined} cost
games.

\begin{definition}\label{def:pos coal det}
  A multiplayer cost game $\mathcal{G} = (\Pi, V, (V_i)_{i \in \Pi},
  E, (\cost_i)_{i \in \Pi})$ is \emph{(positio\-nally/finite-memory)
    coalition-determined} if, for every player $i \in \Pi$, there
  exists a gain function $\costMax^i: \plays \to
  \IR\cup\{+\infty,-\infty\}$ such that
  \begin{itemize} \item $\cost_i \ge \costMax^i$, and

      \vspace{1ex}
    \item the Min-Max cost game $\mathcal{G}^i=(V,V_i,V \setminus
      V_i,E,\cost_i,\costMax^i)$, where player~$i$ (player $\Min$)
      plays against the coalition $\Pi \setminus \{i\}$ (player
      $\Max$), is determined and has (positional/finite-memory)
      optimal strategies for both players. That is: $\exists\,
      \sigmaopt_i \in \SigmaMin, \ \exists\,\sigmaopt_{-i} \in
      \SigmaMax$ (both positional/finite-memory) such that $\forall v \in
      V$ $$\inf_{\sigma_i \in \SigmaMin} \costMax^i(\langle
      \sigma_i,\sigmaopt_{-i} \rangle_v)= \val^i(v) =\sup_{\sigma_{-i}
        \in \SigmaMax} \cost_i(\langle \sigmaopt_i,\sigma_{-i}
      \rangle_v)\,.$$
    \end{itemize}
\end{definition}
Given $i \in \Pi$, note that $\mathcal{G}^i$ does not depend
on the cost functions $\cost_j$, with $j \not= i$.

\begin{example}\label{excont:cost game}
  Let us consider the two-player cost game $\mathcal{G}$ of
  Example~\ref{ex:cost game}, where player~1 has a quantitative
  reachability objective ($\cost_1$) and player~$2$ has a mean-payoff
  objective ($\cost_2$).  We show that $\mathcal{G}$ is \propcoalpos.

  Let us set $\costMax^1=\cost_1$ and study the Min-Max cost game
  $\mathcal{G}^1=(V,V_1,V_2,\linebreak E,\cost_1,\costMax^1)$, where
  player Min (resp. Max) is player~1 (resp. 2) and wants to minimise
  $\cost_1$ (resp. maximise $\costMax^1$). This game is positionally
  determined \cite{trivedi,filar97}. We define positional strategies
  $\sigmaopt_1$ and $\sigmaopt_{-1}$ for player~1 and player~2,
  respectively, in the following way: $\sigmaopt_1(A)=B$ and
  $\sigmaopt_{-1}(B)=A$. From~$A$, their outcome is $\langle
  (\sigmaopt_1,\sigmaopt_{-1}) \rangle_A = (AB)^\omega$, and
  $\cost_1((AB)^\omega)=\costMax^1((AB)^\omega)=+\infty$. One
  can check that the strategies $\sigmaopt_1$ and $\sigmaopt_{-1}$ are
  optimal in $\mathcal{G}^1$.  Note that the positional strategy
  $\tilde{\sigma}^{\star}_1$ defined by
  $\tilde{\sigma}^{\star}_1(A)=D$ is also optimal (for player~1) in
  $\mathcal{G}^1$. With this strategy, we have that $\langle
  (\tilde{\sigma}^{\star}_1,\sigmaopt_{-1}) \rangle_A = (ADB)^\omega$,
  and $\cost_1((ADB)^\omega)=\costMax^1((ADB)^\omega){=+\infty}$.

  We now examine the Min-Max cost game $\mathcal{G}^2=(V,V_2,V_1,E,
  \cost_2,\costMax^2)$, where $\costMax^2$ is defined as $\cost_2$ but
  with $\liminf$ instead of $\limsup$. In this game, player Min
  (resp. Max) is player~2 (resp. 1) and wants to minimise $\cost_2$
  (resp. maximise $\costMax^2$). This game is also positionally determined
  \cite{trivedi,filar97}. Let $\sigmaopt_2$ and $\sigmaopt_{-2}$ be
  the positional strategies for player~2 and player~1, respectively,
  defined as follows: $\sigmaopt_2(B)=C$ and $\sigmaopt_{-2}(A)=D$.
  From~$A$, their outcome is $\langle (\sigmaopt_2,\sigmaopt_{-2})
  \rangle_A = AD(BC)^\omega$, and $\cost_2(AD(BC)^\omega)=
  \costMax^2(AD(BC)^\omega)=2$.  We claim that $\sigmaopt_2$ and
  $\sigmaopt_{-2}$ are the only positional optimal strategies in
  $\mathcal{G}^2$.
\end{example}

Theorem~\ref{theo:exists EN} positively answers Problem~\ref{pbm 1}
for \propcost, \propcoal\ cost games.

\begin{theorem}\label{theo:exists EN}
  In every initialised multiplayer cost game that is \propcost\ and
  \propcoal, there exists a Nash equilibrium.
\end{theorem}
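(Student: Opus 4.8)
The plan is to assemble the Nash equilibrium out of the optimal strategies of the $|\Pi|$ Min-Max games $\mathcal{G}^j$ supplied by \propcoal-determinacy, following the ``one main strategy plus punishments'' pattern. For each player $j$, \propcoal-determinacy gives an optimal (minimising) strategy $\sigmaopt_j$ for $j$ and an optimal (punishing) strategy $\sigmaopt_{-j}$ for the coalition $\Pi\setminus\{j\}$ in $\mathcal{G}^j$, together with the value $\val^j$. I would let the \emph{main play} be the outcome $\rho^\star=\langle(\sigmaopt_i)_{\ipi}\rangle_{v_0}$ in which every player $i$ follows his own optimal strategy $\sigmaopt_i$, and then define the equilibrium profile $(\tau_i)_{\ipi}$ by: as long as the history is a prefix of $\rho^\star$, player $i$ plays $\sigmaopt_i$; as soon as some player $j$ is the \emph{first} to deviate from $\rho^\star$, every other player $i\neq j$ switches forever to his component of the punishing strategy $\sigmaopt_{-j}$. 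Since the game is turn-based and deterministic, every player can read off from the current history whether a deviation has occurred and who deviated first, so each $\tau_i$ is a well-defined strategy.

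To check that $(\tau_i)_{\ipi}$ is a Nash equilibrium I fix a player $j$ and an arbitrary deviation $\sigma_j'$, and compare $\rho^\star$ with the resulting outcome $\rho'$. If $\rho'=\rho^\star$ there is nothing to prove, so let $\rho^\star_k\in V_j$ be the first vertex at which $\rho'$ leaves $\rho^\star$; the two plays share the prefix $h=\rho^\star[0,k]$, which ends in $\rho^\star_k$. Applying \propcost\ at this prefix yields a constant $a\in\IR$ and a factor $b\in\IR^+$ with $\cost_j(\rho^\star)=a+b\cdot c^\star$ and $\cost_j(\rho')=a+b\cdot c'$, where $c^\star$ and $c'$ are the costs of the two suffixes issued from $\rho^\star_k$; as $b\ge 0$, it suffices to show $c^\star\le c'$. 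On the one hand, along $\rho^\star$ player $j$ keeps playing his optimal strategy $\sigmaopt_j$, so the suffix is an outcome of $\sigmaopt_j$ against some coalition behaviour and hence $c^\star\le\val^j(\rho^\star_k)$. On the other hand, from $\rho^\star_k$ onwards the coalition answers $\sigma_j'$ with $\sigmaopt_{-j}$; thus the deviating suffix is an outcome in $\mathcal{G}^j$ of \emph{some} strategy of player~$\Min$ against the optimal $\sigmaopt_{-j}$, so $\costMax^j$ of it is at least $\val^j(\rho^\star_k)$, and since $\cost_j\ge\costMax^j$ we get $c'\ge\val^j(\rho^\star_k)$. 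Combining, $c^\star\le\val^j(\rho^\star_k)\le c'$, hence $\cost_j(\rho')\ge\cost_j(\rho^\star)$ and $\sigma_j'$ is not profitable.

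The step I expect to be delicate is the pair of local value estimates $c^\star\le\val^j(\rho^\star_k)$ and $c'\ge\val^j(\rho^\star_k)$: both invoke optimality of $\sigmaopt_j$ and $\sigmaopt_{-j}$ \emph{at the vertex $\rho^\star_k$ reached after the history $h$}, whereas \propcoal-determinacy only guarantees optimality when these strategies are started afresh at a vertex. For the punishment this is harmless, as the coalition may simply restart $\sigmaopt_{-j}$ (with reset memory) at the moment of the deviation, and optimality of $\sigmaopt_{-j}$ from $\rho^\star_k$ then gives $c'\ge\val^j(\rho^\star_k)$ directly. For the main play the bound $c^\star\le\val^j(\rho^\star_k)$ needs $\sigmaopt_j$ to remain optimal in the subgame rooted at $\rho^\star_k$; this is automatic when the optimal strategies are positional---which is exactly the situation in the applications of Section~\ref{sec:appl}---and in general is obtained by taking $\sigmaopt_j$ to be optimal from every vertex, so that its residual after $h$ still secures $\val^j(\rho^\star_k)$. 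The only other point to keep in mind is the sign condition $b\in\IR^+$ built into \propcost, which is what lets the single comparison $c^\star\le c'$ survive multiplication by $b$; without $b\ge0$ the punishment threat could invert and the argument would break.
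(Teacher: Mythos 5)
Your proposal is correct and follows essentially the same route as the paper's proof: the same grim-trigger construction (main play $\langle(\sigmaopt_i)_{\ipi}\rangle_{v_0}$, coalition punishment of the first deviator via $\sigmaopt_{-j}$), the same two value estimates at the deviation vertex via optimality of $\sigmaopt_j$ and $\sigmaopt_{-j}$ together with $\cost_j\ge\costMax^j$, and the same use of \propcost\ with $b\ge 0$ to transfer the comparison through the common prefix. Your closing discussion of residual optimality is a point the paper's proof passes over silently (its notation $\rho=h\langle(\sigmaopt_i)_{\ipi}\rangle_v$ implicitly treats the strategies as restarted/positional, exactly as you resolve it), so this is a refinement of presentation rather than a different argument.
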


\begin{proof}
  Let $(\mathcal{G} = (\Pi, V, (V_i)_{i \in \Pi}, E, (\cost_i)_{i \in
    \Pi}),v_0)$ be an initialised multiplayer cost game that is
  \propcost\ and \propcoal. Thanks to the latter property, we know
  that, for every $i \in \Pi$, there exists a gain function
  $\costMax^i$ such that the Min-Max cost game $\mathcal{G}^i=(V,V_i,V
  \setminus V_i,E,\cost_i,\costMax^i)$ is determined and there exist
  optimal strategies $\sigmaopt_i$ and $\sigmaopt_{-i}$ for player $i$
  and the coalition $\Pi \setminus \{i\}$ respectively. In particular,
  for $j \not = i$, we denote by $\sigmaopt_{j,i}$ the strategy of
  player $j$ derived from the strategy $\sigmaopt_{-i}$ of the
  coalition $\Pi \setminus \{i\}$.

  The idea is to define the required Nash equilibrium as follows: each
  player~$i$ plays according to his strategy $\sigmaopt_i$ and
  punishes the first player~$j \not= i$ who deviates from his strategy
  $\sigmaopt_j$, by playing according to $\sigmaopt_{i,j}$ (the
  strategy of player~$i$ derived from $\sigmaopt_{-j}$ in the game
  $\mathcal{G}^j$).

  Formally, we consider the outcome of the optimal strategies
  $(\sigmaopt_i)_{i \in \Pi}$ from $v_0$, and set $\rho:=\langle
  (\sigmaopt_i)_{i \in \Pi} \rangle_{v_0}$. We need to specify a
  punishment function $\pun: \hist \to \Pi \cup \{\bot\}$ that detects
  who is the first player to deviate from the play~$\rho$, i.e. who
  has to be punished. For the initial vertex $v_0$, we define
  $\pun(v_0) = \bot$ (meaning that nobody has deviated from $\rho$)
  and for every history $hv \in \hist$, we let:
  \[ \pun(hv) := \left\{ \begin{array}{ll} 
    \bot & \mbox{ if $\pun(h)= \bot$ and $hv$ is a prefix of
      $\rho$,}\\
    
    i & \mbox{ if $\pun(h)= \bot$, $hv$ is not a prefix of
      $\rho$, and $\last(h) \in V_i$,}\\

    \pun(h) & \mbox{ otherwise ($\pun(h) \not=\bot$).}
  \end{array}\right. \] 
  Then the definition of the Nash equilibrium $(\tau_i)_{\ipi}$ in
  $\mathcal{G}$ is as follows. For all $i \in \Pi$ and $h \in \hist$
  such that $\last(h) \in V_i$, 
  \[ \tau_i(h) :=  \left\{ 
  \begin{array}{ll} 
    \sigmaopt_i(h) & \mbox{ if $\pun(h)=\bot$ or $i$,}\\

    \sigmaopt_{i,\pun(h)}(h) & \mbox{ otherwise.}
  \end{array}\right.\] 
  Clearly the outcome of $(\tau_i)_{\ipi}$ is the play $\rho$
  ($=\langle (\sigmaopt_i)_{i \in \Pi} \rangle_{v_0}$).

  Now we show that the strategy profile $(\tau_i)_{\ipi}$ is a Nash
  equilibrium in $\mathcal{G}$. As a contradiction, let us assume that
  there exists a profitable deviation $\tau_j'$ for some player~$j \in
  \Pi$. We denote by $\rho':= \langle \tau_j', (\tau_i)_{i \in \Pi
    \setminus \{j\}} \rangle_{v_0}$ the outcome where player~$j$ plays
  according to his profitable deviation $\tau_j'$ and the players of
  the coalition $\Pi \setminus \{j\}$ keep their strategies
  $(\tau_i)_{i \in \Pi \setminus \{j\}}$. Since $\tau_j'$ is a
  profitable deviation for player~$j$ w.r.t. $(\tau_i)_{\ipi}$, we
  have that:
  \begin{equation} 
    \cost_j(\rho') <\cost_j(\rho).\label{eq:cost} 
  \end{equation}

  As both plays $\rho$ and $\rho'$ start from vertex $v_0$, there
  exists a history $hv \in \hist$ such that $\rho=h \langle
  (\tau_i)_{\ipi} \rangle_v$ and $\rho'=h \langle \tau_j',(\tau_i)_{i
    \in \Pi \setminus \{j\}} \rangle_{v}$ (remark that $h$ could be
  empty). Among the common prefixes of $\rho$ and $\rho'$, we choose
  the history $hv$ of maximal length. By definition of the strategy
  profile $(\tau_i)_{i \in \Pi}$, we can write in the case of the
  outcome $\rho$ that $\rho=h \langle (\sigmaopt_i)_{\ipi}
  \rangle_v$. Whereas in the case of the outcome $\rho'$, player~$j$
  does not follow his strategy $\sigmaopt_j$ any more from vertex $v$,
  and so, the coalition $\Pi \setminus \{j\}$ punishes him by playing
  according to the strategy $\sigmaopt_{-j}$ after history $hv$, and
  so $\rho'=h \langle \tau_j', \sigmaopt_{-j} \rangle_{v}$ (see
  Figure~\ref{fig:game}).

  \begin{figure}[h!] 
  \centering
    \begin{tikzpicture}[yscale=.3,xscale=.75]
      \everymath{\scriptstyle}
      \draw (0,8) -- (-4,0);
      \draw (0,8) -- (4,0);
      
      \draw[fill=black] (0,8) circle (1.5pt);
      \path (0,8) node[right] (q0) {$v_0$};

      \draw[thick] (0,8) .. controls (-.2,7.1) .. (.3,5.5);
      \path (-.1,7) node[right] (q0) {$h$};

      \draw[fill=black] (.3,5.5) circle (1pt);
      \path (.3,5.5) node[right] (q0) {$v$};

      \draw (.3,5.5) .. controls (1,3.5) .. (1.3,0); 

      \path (2.5,0.1) node[below] (q0) {$\rho\,=\,h \langle
        (\sigmaopt_i)_{\ipi} \rangle_v$};

     \draw[fill=black] (.3,5.5) circle (1.5pt);

      \draw (.3,5.5) .. controls (-1,4) .. (-2,0);

      \path (-.9,.1) node[below] (q0) {$\rho'\,=\,h \langle
        \tau_j', \sigmaopt_{-j} \rangle_{v}$};
      
    \end{tikzpicture}
    \caption{Sketch of the tree representing the unravelling of
        the game $\mathcal{G}$ from $v_0$.}
    \label{fig:game}
  \end{figure}
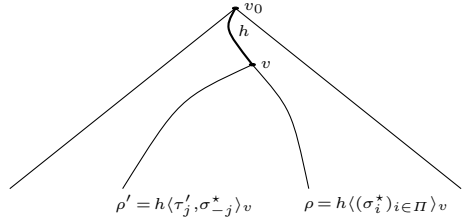

  Since $\sigmaopt_{-j}$ is an optimal strategy for the coalition $\Pi
  \setminus \{j\}$ in the determined Min-Max cost game
  $\mathcal{G}^j$, we have:
  \begin{eqnarray}
    \val^j(v) & = & \inf_{\sigma_j \in \SigmaMin} \costMax^j(\langle
    \sigma_j,\sigmaopt_{-j} \rangle_v) \nonumber \\ & \le &
    \costMax^j(\langle \tau_j', \sigmaopt_{-j} \rangle_{v}) \nonumber
    \\ & \le & \cost_j(\langle \tau_j', \sigmaopt_{-j}
    \rangle_{v})\,. \label{eq:val}
  \end{eqnarray}
  The last inequality comes from the hypothesis $\cost_j \ge
  \costMax^j$ in the game $\mathcal{G}^j$.

  Moreover, the game $\mathcal{G}$ is \propcost, and
    then, when considering the history $hv$, there exist
    $a \in \IR$ and $b \in \IR^+$ such
    that 
    \begin{equation}\label{eq:ineq 1}     
    \cost_j(\rho')= \cost_j(h \langle \tau_j', \sigmaopt_{-j}
    \rangle_{v})= a + b \cdot\cost_j(\langle \tau_j', \sigmaopt_{-j}
    \rangle_{v})\,.
    \end{equation}
    As $b \ge 0$, Equations~\eqref{eq:val} and~\eqref{eq:ineq 1}
    imply:
    \begin{equation}\label{eq:costrho'} 
      \cost_j(\rho') \ge a + b \cdot \val^j(v)\,.  
    \end{equation}

  Since $h$ is also a prefix of $\rho$, we have:
  \begin{equation}\label{eq:ineq 2}
    \cost_j(\rho)= \cost_j(h \langle (\sigmaopt_i)_{\ipi} \rangle_{v})
    = a + b \cdot\cost_j(\langle (\sigmaopt_i)_{\ipi} \rangle_{v})\,.
  \end{equation}
  Furthermore, as $\sigmaopt_{j}$ is an optimal strategy for
  player~$j$ in the Min-Max cost game $\mathcal{G}^j$, it follows
  that:
  \begin{eqnarray}
    \val^j(v) & = & \sup_{\sigma_{-j} \in \SigmaMax} \cost_j(\langle
    \sigmaopt_j,\sigma_{-j} \rangle_v) \nonumber \\ & \ge &
    \cost_j(\langle (\sigmaopt_i)_{\ipi}
    \rangle_{v})\,. \label{eq:val2}
  \end{eqnarray}
  Then, Equations~\eqref{eq:ineq 2} and~\eqref{eq:val2} imply:
  \begin{equation}\label{eq:costrho}
    \cost_j(\rho) \le  a + b \cdot \val^j(v)\,.
  \end{equation}

  Finally, Equations~\eqref{eq:costrho'} and~\eqref{eq:costrho} lead
  to the following inequality:
  $$\cost_j(\rho) \le a + b \cdot \val^j(v) \le \cost_j(\rho')\,,$$
  which contradicts Equation~\eqref{eq:cost}. The strategy profile
  $(\tau_i)_{\ipi}$ is then a Nash equilibrium in the game
  $\mathcal{G}$. 
  \qed
\end{proof}

\begin{remark}
  The proof of Theorem~\ref{theo:exists EN} remains valid for cost
  functions $\cost_i: \plays \to K$, where $\langle K,+,\cdot,0,1,\le
  \rangle$ is an ordered field. This allows for instance to consider
  non-standard real costs and enjoy infinitesimals to model the costs
  of a player.
\end{remark}

\begin{example}\label{excont2:cost game}
  Let us consider the initialised two-player cost game
  $(\mathcal{G},A)$ of Example~\ref{ex:cost game}, where player~1 has
  a quantitative reachability objective ($\cost_1$) and player~$2$ has
  a mean-payoff objective ($\cost_2$). One can show that $\mathcal{G}$
  is \propcost. Since we saw in Example~\ref{excont:cost game} that
  this game is also \propcoalpos, we can apply the construction in the
  proof of Theorem~\ref{theo:exists EN} to get a Nash equilibrium
  in~$\mathcal{G}$.  The construction from this proof may result in
  two different Nash equilibria, depending on the selection of the
  strategies $\sigmaopt_1$/$\tilde{\sigma}^\star_1$, $\sigmaopt_{-1}$,
  $\sigmaopt_2$ and $\sigmaopt_{-2}$ as defined in
  Example~\ref{excont:cost game}.

  The first Nash equilibrium $(\tau_1,\tau_2)$ with outcome
  $\rho=\langle \sigmaopt_1,\sigmaopt_2 \rangle_A=A(BC)^\omega$ is
  given, for any history~$h$, by:
  \[ \tau_1(hA) :=  \left\{ 
  \begin{array}{ll} 
    B & \mbox{ if $\pun(hA)=\{\bot,1\}$}\\
    
    D & \mbox{ otherwise}
  \end{array}\right. \quad ; \quad 
  \tau_2(hB) :=  \left\{ 
  \begin{array}{ll} 
    C & \mbox{ if $\pun(hB)=\{\bot,2\}$}\\
    
    A & \mbox{ otherwise}
  \end{array}\right. \]
  where the punishment function $\pun$ is defined as in the proof of
  Theorem~\ref{theo:exists EN} and depends on the play $\rho$. The
  cost for this finite-memory Nash equilibrium is
  $\cost_1(\rho)=2=\cost_2(\rho)$.

 The strategy $\tilde{\tau}_1$ of the second Nash equilibrium
 $(\tilde{\tau}_1,\tau_2)$ with outcome $\tilde{\rho}=\langle
 \tilde{\sigma}^\star_1,\sigmaopt_2 \rangle_A=AD(BC)^\omega$ is given
 by $\tilde{\tau}_1(hA) := D$ for all history~$h$. The cost for this
 finite-memory Nash equilibrium is $\cost_1(\tilde{\rho})=6$ and
 $\cost_2(\tilde{\rho})=2$, respectively.

 Note that there is no positional Nash equilibrium with outcome $\rho$
 (resp. $\tilde{\rho}$).
\end{example}

The two following theorems provide results about the complexity of the
Nash equilibrium defined in the latter proof. Applications of these
theorems to specific classes of cost games are provided in
Section~\ref{sec:appl}.

\begin{theorem}\label{theo:mem lin}
  In every initialised multiplayer cost game that is \propcost\ and
  \propcoalposemph, there exists a Nash equilibrium with memory (at
  most) $|V|+|\Pi|$.
\end{theorem}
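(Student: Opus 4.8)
The plan is to revisit the Nash equilibrium $(\tau_i)_{i\in\Pi}$ constructed in the proof of Theorem~\ref{theo:exists EN} and show that, under the stronger hypothesis of positional coalition-determinacy, each $\tau_i$ can be implemented by a finite strategy automaton with at most $|V|+|\Pi|$ memory states. The key observation is that the construction in that proof only uses the punishment function $\pun$ together with the optimal strategies $\sigmaopt_i$ and $\sigmaopt_{-j}$. Since we now assume all these optimal strategies are \emph{positional}, the only source of memory is the bookkeeping needed to track the value of $\pun(h)\in\Pi\cup\{\bot\}$, i.e.\ to remember whether nobody has yet deviated, or which player $j$ was the first to deviate. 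The strategy of player~$i$ then reduces to: play positionally according to $\sigmaopt_i$ as long as $\pun=\bot$ or $\pun=i$, and play positionally according to $\sigmaopt_{i,j}$ once $\pun=j$ for some $j\neq i$.

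First I would fix, for each player $i$, positional optimal strategies $\sigmaopt_i$ and $\sigmaopt_{-i}$ in the Min-Max game $\mathcal{G}^i$, as guaranteed by positional coalition-determinacy, and let $\rho=\langle(\sigmaopt_i)_{i\in\Pi}\rangle_{v_0}$ be the main outcome. Then I would build the Mealy automaton $\mathcal{A}_i=(M,m_0,V,\delta,\nu)$ for $\tau_i$ with memory set $M=\{\bot\}\uplus\Pi$, so $|M|=|\Pi|+1$ memory states tracking the current value of $\pun$. The naive count gives $|\Pi|+1$ rather than $|V|+|\Pi|$, so the point is that detecting deviation from $\rho$ requires additionally remembering how far along the deterministic play $\rho$ we currently are. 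Since $\rho$ is the outcome of a positional profile, it is eventually periodic and visits at most $|V|$ distinct vertices before cycling; I would encode the ``on-track'' portion of the memory by the current vertex of $\rho$, contributing at most $|V|$ states, and add one dedicated state per player for the $|\Pi|$ punishment phases, yielding the bound $|V|+|\Pi|$.

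The memory update $\delta$ would implement exactly the case distinction defining $\pun$: from an on-track state labelled by the current expected vertex of $\rho$, reading the actual successor $v$, the automaton either advances to the next on-track state (if $v$ matches $\rho$'s prescribed move) or transitions irrevocably into the punishment state $j$ corresponding to the player $j$ with $\last(h)\in V_j$ who just deviated; once in a punishment state, $\delta$ stays there forever. The transition choice function $\nu$ would output $\sigmaopt_i(v)$ in on-track states and in the punishment state $i$, and $\sigmaopt_{i,j}(v)$ in each punishment state $j\neq i$; because all these underlying strategies are positional, $\nu$ is a well-defined function of the current memory state and current vertex, with no further history needed. I would then verify that $\sigma_{\mathcal{A}_i}=\tau_i$ by checking that $\delta^*(m_0,h)$ faithfully encodes $\pun(h)$ (and the position along $\rho$ when $\pun(h)=\bot$), so that the resulting profile is the very Nash equilibrium of Theorem~\ref{theo:exists EN}.

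The main obstacle I expect is pinning down the memory bound tightly rather than settling for $|V|+|\Pi|+1$ or similar. The subtlety is that the ``on-track'' states and the single punishment state for player $i$'s \emph{own} deviation ($\pun=i$) both prescribe playing $\sigmaopt_i$, so they can be merged, and the periodicity of $\rho$ must be exploited so that the on-track bookkeeping never needs more than $|V|$ states even though $\rho$ is infinite. I would argue that since the on-track memory only needs to determine $\rho$'s next prescribed move from the current vertex---which, on the unique positional outcome $\rho$, is itself a positional function of the vertex---the on-track phase genuinely costs at most $|V|$ states, and the $|\Pi|$ additional states account for one punishment mode per potential first-deviator, giving the claimed linear bound $|V|+|\Pi|$.
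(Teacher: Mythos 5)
Your proposal is correct and follows essentially the same route as the paper's proof: it reuses the equilibrium $(\tau_i)_{\ipi}$ from Theorem~\ref{theo:exists EN}, observes that positionality of $(\sigmaopt_i)_{i \in \Pi}$ makes $\rho$ a lasso $v_0\ldots v_{k-1}(v_k\ldots v_n)^{\omega}$ with at most $|V|$ distinct vertices, and implements each $\tau_i$ by a Mealy automaton whose states track the position along $\rho$ together with one punishment mode per opponent, verifying $\sigma_{\mathcal{A}_i}=\tau_i$. The only cosmetic difference is bookkeeping: the paper labels the on-track states by edges $u_1u_2$ of the lasso (so the first deviator is read off as the owner of $u_1$) and omits the state for $\pun=i$ as unreachable for player~$i$'s own automaton---exactly the observation you make to land on the $|V|+|\Pi|$ bound.
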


\begin{theorem}\label{theo:exists EN gen}
  In every initialised multiplayer cost game that is \propcost\
  and \emph{finite-memory} coalition-determined, there exists a Nash
  equilibrium with finite memory.
\end{theorem}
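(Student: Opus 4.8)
The plan is to reuse \emph{verbatim} the strategy profile $(\tau_i)_{\ipi}$ constructed in the proof of Theorem~\ref{theo:exists EN}. That construction already yields a Nash equilibrium, and its correctness — the chain of inequalities \eqref{eq:cost}--\eqref{eq:costrho} — only uses that the $\sigmaopt_i$ and $\sigmaopt_{-i}$ are optimal in the Min-Max games $\mathcal{G}^i$, and never \emph{how} these strategies are realised. Since finite-memory coalition-determinacy is a special case of coalition-determinacy, the hypotheses of Theorem~\ref{theo:exists EN} are met, so $(\tau_i)_{\ipi}$ is a Nash equilibrium; it remains only to show that each $\tau_i$ is finite-memory. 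So first I would fix, for every $k \in \Pi$, a finite strategy automaton $\mathcal{A}_k=(M_k,m_0^k,\dots)$ computing the compliance strategy $\sigmaopt_k$, and, for every $j \in \Pi$, a finite strategy automaton $\mathcal{B}_j=(N_j,n_0^j,\dots)$ computing the coalition's optimal punishment strategy $\sigmaopt_{-j}$ in $\mathcal{G}^j$; these exist precisely because $\mathcal{G}$ is finite-memory coalition-determined.

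Next I would build an explicit strategy automaton for each $\tau_i$. Its memory carries a \emph{mode register} with values in $\Pi \cup \{\bot\}$ — exactly the range of the punishment function $\pun$ — together with one mode-dependent sub-memory. In mode $\bot$ the sub-memory is the state of the synchronous product $\prod_{k \in \Pi}\mathcal{A}_k$: running all compliance automata in parallel lets the automaton read off $\sigmaopt_i$'s move and, crucially, also the successor prescribed by $\rho$ at the current vertex, which is what is needed to detect a deviation. On reading a move that matches the prescription the product is advanced and the mode stays $\bot$; on reading a deviating move out of a vertex in $V_k$, the mode switches to $k$. In mode $k=i$ the automaton simply keeps running $\mathcal{A}_i$, so $\tau_i$ continues to play $\sigmaopt_i$ as the definition demands; in mode $k \neq i$ the sub-memory is the state of $\mathcal{B}_k$, the automaton outputs player~$i$'s component of $\sigmaopt_{-k}$, and it never leaves this mode. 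Up to the mode tag the state space is contained in $\bigl(\prod_{k}M_k\bigr) \cup \bigcup_{j}N_j$, hence finite; a crude bound is therefore $\prod_{k \in \Pi}|M_k| + |M_i| + \sum_{j \neq i}|N_j|$. Thus each $\tau_i$ is finite-memory and $(\tau_i)_{\ipi}$ is a finite-memory Nash equilibrium.

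The step I expect to be the main obstacle is the correct \emph{initialisation of the punishment sub-automaton}. When the mode switches to $j$ at the last common vertex $v$, the coalition must collectively reproduce $\sigmaopt_{-j}$ \emph{as a strategy launched from $v$}, since the inequalities \eqref{eq:val} and \eqref{eq:val2} invoke optimality through the value $\val^j(v)$ of the game \emph{started at} $v$. I would handle this by resetting $\mathcal{B}_j$ to the state $\delta_{\mathcal{B}_j}(n_0^j,v)$ at the moment of the switch and then feeding it the suffix of the play from $v$ onwards, so that its outputs coincide with those of $\sigmaopt_{-j}$ started freshly at $v$. What makes this legitimate is that optimal strategies are required to be optimal from \emph{every} vertex (Definition~\ref{def:determinacy}), so a finite-memory optimal strategy may be ``restarted'' at any $v$ without losing optimality. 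Verifying that this restart makes the coalition's joint behaviour after the deviation equal to $\langle \tau_j',\sigmaopt_{-j}\rangle_v$ is exactly the point that lets the argument of Theorem~\ref{theo:exists EN} go through unchanged, and it is where the finite-memory case genuinely differs from the positional one treated in Theorem~\ref{theo:mem lin}.
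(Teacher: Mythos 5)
Your proposal is correct and follows essentially the same route as the paper's sketch: you reuse the equilibrium $(\tau_i)_{\ipi}$ of Theorem~\ref{theo:exists EN} unchanged, track the outcome via the synchronous product of the compliance automata (the paper phrases this as viewing the $\sigmaopt_k$'s as positional strategies on the product graph $G \times \mathcal{A}^{\sigmaopt_1} \times \cdots \times \mathcal{A}^{\sigmaopt_{|\Pi|}}$, which makes $\rho$ a lasso), and switch permanently to the punishment automaton $\mathcal{B}_j$ restarted at the deviation vertex $v$ --- legitimate, as you note, because Definition~\ref{def:determinacy} demands optimality from every vertex. The only quibble is that your crude bound $\prod_{k}|M_k| + |M_i| + \sum_{j \neq i}|N_j|$ slightly undercounts: since the memory update function sees only the new vertex, the mode-$\bot$ memory must also record the currently prescribed successor (hence the factor $|V|$ in the paper's bound $|V|\cdot\prod_{j\in\Pi}|\mathcal{A}^{\sigmaopt_{j}}| + \sum_{j\in\Pi\setminus\{i\}}|\mathcal{A}^{\sigmaopt_{i,j}}|$), which affects the size estimate but not the finiteness the theorem asserts.
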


The proofs of these two theorems rely on the construction of the Nash
equilibrium provided in the proof of Theorem~\ref{theo:exists EN}.

\section{Applications}\label{sec:appl}

In this section, we exhibit several classes of \emph{classical
  objectives} that can be encoded in our general setting. The list we
propose is far from being exhaustive.

\subsection{Qualitative Objectives}
Multiplayer games with qualitative (win/lose) objectives can naturally be
encoded via multiplayer cost games; for instance via cost functions
$\cost_i:\plays \to \{1,+\infty\}$, where $1$ (resp. $+\infty$) means
that the play is won (resp. lost) by player~$i$. Let us now consider
the subclass of qualitative games with prefix-independent\footnote{An
  objective $\Omega \subseteq V^\omega$ is prefix-independent if only
  if for every play $\rho = \rho_0 \rho_1 \ldots \in V^\omega$, we
  have that $\rho \in \Omega$ iff for every $n \in \IN$, $ \rho_n
  \rho_{n+1} \ldots \in \Omega$.} Borel objectives. Given
such a game $\mathcal{G}$, we have that $\mathcal{G}$ is
coalition-determined, as a consequence of the Borel determinacy
theorem~\cite{Ma75}.  Moreover the prefix-independence hypothesis
obviously guarantees that $\mathcal{G}$ is also \propcost\ (by taking
$a=0$ and $b=1$). By applying Theorem~\ref{theo:exists EN}, we obtain
the existence of a Nash equilibrium for qualitative games with
prefix-independent Borel objectives. Let us notice that this result
is already present in~\cite{GU08}.

When considering more specific subclasses of qualitative games
enjoying a positional determinacy result, such as parity
games~\cite{GTW02}, we can apply Theorem~\ref{theo:mem lin} and ensure
 existence of a Nash equilibrium whose memory is (at most)
linear.

\subsection{Classical Quantitative Objectives}

We here give four well-known kinds of Min-Max cost games and see later
that they are determined. For each sort of game, the cost and gain
functions are defined from a price function (and a reward function in
the last case), which labels the edges of the game graph with prices
(and rewards).

\begin{definition}[\cite{trivedi}]\label{def:part cost games}
  Given a game graph $G=(V,\VMin,\VMax,E)$, a price function $\price:
  E \to \IR$ that assigns a price to each edge, a
  diverging\footnote{For all plays $\rho=\rho_0\rho_1\ldots$ in $G$,
    it holds that $\lim_{n \to \infty} |\sum_{i=1}^n
    \reward(\rho_{i-1},\rho_i)| = +\infty$. This is equivalent to
    requiring that every cycle has a positive sum of rewards.}  reward
  function $\reward: E \to \IR$ that assigns a reward to each edge,
  and a play $\rho=\rho_0\rho_1\ldots$ in $G$, we define the following
  Min-Max cost games:
  \begin{enumerate}
  \item a \emph{reachability-price game} is a Min-Max cost game
    $\mathcal{G} = (G,\RPMin,\RPMax)$ together with a given goal set $\FMin
    \subseteq V$, where
    \[  \RPMin(\rho)=\RPMax(\rho) = \left\{
    \begin{array}{ll}
        \price(\rho[0,n]) & \mbox{ if
        $n$ is the \emph{least} index s.t. $\rho_n \in
        \FMin$,}\\ + \infty & \mbox{ otherwise,}
    \end{array}\right. \]

    with $\price(\rho[0,n])= \sum_{i=1}^n \price(\rho_{i-1},\rho_i)$;
    \smallskip
  \item a \emph{discounted-price game} is a Min-Max cost game
    $\mathcal{G} = (G,\DPMin(\lambda),\DPMax(\lambda))$ together with
    a given discount factor $\lambda \in\, \left]0,1\right[$, where
    $$\DPMin(\lambda)(\rho)=\DPMax(\lambda)(\rho)=
        (1-\lambda) \cdot \sum_{i=1}^{+\infty} \lambda^{i-1}
        \price(\rho_{i-1},\rho_i)\,;$$

  \item an \emph{average-price game}\footnote{When the cost function
    of a player is $\APMin$, we  say that he has a
    mean-payoff objective.} is a Min-Max cost game $\mathcal{G} =
    (G,\APMin,\APMax)$, where
    $$\APMin(\rho)=\displaystyle \limsup_{n \to +\infty}
    \frac{\price(\rho[0,n])}{n}  \quad\text{and}\quad
    \APMax(\rho)= \displaystyle \liminf_{n \to +\infty}
    \frac{\price(\rho[0,n])}{n}\,;$$

  \item a \emph{price-per-reward-average game} is a Min-Max cost game
    $\mathcal{G} = (G,\PRAvgMin,\linebreak\PRAvgMax)$, where
    $$\PRAvgMin(\rho)=\displaystyle \limsup_{n \to +\infty}
    \frac{\price(\rho[0,n])}{\reward(\rho[0,n])} \quad\text{and}\quad
    \PRAvgMax(\rho)=\displaystyle \liminf_{n \to +\infty}
    \frac{\price(\rho[0,n])}{\reward(\rho[0,n])}\,,$$ with
    $\reward(\rho[0,n])=\sum_{i=1}^n \reward(\rho_{i-1},\rho_i)$.
  \end{enumerate}
\end{definition}
An average-price game is then a particular case of a
price-per-reward-average game. Let us remark that, in
Example~\ref{ex:cost game}, the cost function $\cost_1$
(resp. $\cost_2$) corresponds to $\RPMin$ with $\FMin=\{C\}$
(resp. $\APMin$). The game $\mathcal{G}^1$ (resp. $\mathcal{G}^2$) of
Example~\ref{excont:cost game} is a reachability-price
(resp. average-price) game.

The following theorem is a well-known result about the particular cost
games described in Definition~\ref{def:part cost games}.

\begin{theorem}[\cite{trivedi,filar97}]\label{theo:determined}
  Reachability-price games, discounted-price games,
  average-price games, and price-per-reward games are determined and
  have positional optimal strategies.
\end{theorem}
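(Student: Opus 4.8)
The plan is to establish, for each of the four classes of games listed in Definition~\ref{def:part cost games}, two separate facts: \emph{determinacy} (i.e.\ $\valup(v) = \vallow(v)$ for all $v$) and the \emph{existence of positional optimal strategies} for both players. Since each game is a two-player zero-sum\footnote{Strictly speaking, the cost and gain functions differ only in the choice of $\limsup$ versus $\liminf$; but the determinacy argument and optimal strategies are insensitive to this.} quantitative game on a finite graph, the natural strategy is to treat each payoff type with the tools tailored to it, rather than seeking a single unified argument.

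First I would handle the \emph{discounted-price} case, as it is the most transparent. Here the payoff $\DPMin(\lambda)(\rho) = (1-\lambda)\sum_{i=1}^\infty \lambda^{i-1}\price(\rho_{i-1},\rho_i)$ is a contraction, so the value function is the unique fixed point of the Bellman--Shapley operator $(\operatorname{T} f)(v) = \min/\max_{(v,v')\in E}\bigl[(1-\lambda)\price(v,v') + \lambda f(v')\bigr]$ (min at $\VMin$-vertices, max at $\VMax$-vertices). Banach's fixed-point theorem gives existence and uniqueness of $f = \val$, and reading off the optimising successor at each vertex yields positional optimal strategies for both players. The \emph{average-price} (mean-payoff) case I would treat next, invoking the classical Ehrenfeucht--Mycielski result: mean-payoff games are positionally determined, which can be proved either by a discount-factor limit argument ($\lambda \to 1$) combined with the fact that positional strategies suffice for every $\lambda$, or directly via the potential/cycle argument. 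The \emph{price-per-reward-average} case generalises this, and the divergence hypothesis on $\reward$ (every cycle has positive reward sum) is exactly what guarantees the ratio is well-behaved; I would reduce it to the mean-payoff analysis by the standard ratio-game technique.

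The \emph{reachability-price} case requires a slightly different treatment because of the $+\infty$ values when $\FMin$ is never reached. I would characterise $\val$ as the least fixed point of the reachability-price Bellman operator over the extended reals, using a value-iteration argument: set $f_0(v) = 0$ for $v \in \FMin$ and $+\infty$ otherwise, iterate the one-step operator, and argue monotone convergence. Optimal positional strategies for Min arise from choosing cost-minimising successors along finite shortest paths to $\FMin$, while Max plays to avoid $\FMin$ or to inflate the accumulated price; care is needed to show that a \emph{uniform} positional strategy works simultaneously from all vertices.

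The main obstacle I expect is not any single case but rather the fact that the cleanest proof of each of these results already appears in the literature, so the honest thing to do is to cite it. Indeed, the theorem is stated with attributions \cite{trivedi,filar97}, which strongly signals that the intended proof is \emph{by reference}: these positional determinacy results are classical and the paper merely collects them for later use. I would therefore present a short unified remark explaining that discounted and reachability games are handled by fixed-point/value-iteration arguments and that average and price-per-reward games follow from the Ehrenfeucht--Mycielski positionality together with a ratio-reduction, and defer the detailed verifications to the cited sources rather than reproving well-known facts.
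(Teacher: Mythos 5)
Your proposal matches the paper's treatment exactly: the paper gives no proof of Theorem~\ref{theo:determined} at all, stating it as a classical result imported from \cite{trivedi,filar97}, which is precisely the proof-by-reference you converge on in your final paragraph. Your auxiliary sketches (Banach fixed-point/Bellman operator for discounted-price, Ehrenfeucht--Mycielski positionality for average-price, the ratio-game reduction for price-per-reward under the diverging-reward hypothesis, and least-fixed-point value iteration for reachability-price) are the standard arguments from those cited sources, so nothing in your outline diverges from the intended justification.
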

This result implies that a multiplayer cost game where each cost
function is $\RPMin$, $\DPMin$, $\APMin$ or $\PRAvgMin$ is
\propcoalpos. Moreover, one can show that such a game is \propcost.
Theorem~\ref{theo:exists EN PC} then follows from
Theorem~\ref{theo:mem lin}.

\begin{theorem}\label{theo:exists EN PC}
 In every initialised multiplayer cost game $\mathcal{G} = (\Pi, V,
 (V_i)_{i \in \Pi}, E,\linebreak (\cost_i)_{i \in \Pi})$ where the
 cost function $\cost_i$ belongs to $\{\RPMin,\DPMin,\APMin,\linebreak
 \PRAvgMin\}$ for every player $i \in \Pi$, there exists a Nash
 equilibrium with memory (at most) $|V|+|\Pi|$.
\end{theorem}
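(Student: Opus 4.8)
The plan is to derive Theorem~\ref{theo:exists EN PC} as a direct instance of Theorem~\ref{theo:mem lin}. To apply that theorem, I must verify two hypotheses for any multiplayer cost game in which every individual cost function $\cost_i$ lies in $\{\RPMin,\DPMin,\APMin,\PRAvgMin\}$: that the game is \propcost, and that it is \propcoalposemph. The memory bound $|V|+|\Pi|$ then comes for free from the conclusion of Theorem~\ref{theo:mem lin}, so no separate work on the strategy automaton is needed here.

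\medskip

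\noindent\textbf{Coalition-determinacy.}\ First I would establish that the game is positionally coalition-determined. Fix a player $i \in \Pi$. I would take $\costMax^i$ to be the natural ``$\Max$-companion'' of $\cost_i$: namely $\RPMax$, $\DPMax$, $\APMax$, or $\PRAvgMax$ respectively, built from the \emph{same} price (and reward) function as $\cost_i$. By Theorem~\ref{theo:determined}, the Min-Max cost game $\mathcal{G}^i=(V,V_i,V\setminus V_i,E,\cost_i,\costMax^i)$ is then exactly a reachability-price, discounted-price, average-price, or price-per-reward-average game, hence determined with positional optimal strategies for both players. It remains only to check the side condition $\cost_i \ge \costMax^i$ of Definition~\ref{def:pos coal det}. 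For the reachability- and discounted-price cases the two functions coincide, so the inequality is trivial; for the average-price and price-per-reward-average cases the $\Min$-function uses $\limsup$ while the $\Max$-function uses $\liminf$ of the same sequence, and since $\limsup \ge \liminf$ pointwise we get $\cost_i \ge \costMax^i$ on every play. Thus $\mathcal{G}$ is \propcoalposemph.

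\medskip

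\noindent\textbf{The \propcost\ property.}\ Next I would verify Definition~\ref{def:prefix}: for each $i$, each history $hv$, I must exhibit $a \in \IR$, $b \in \IR^+$ with $\cost_i(h\rho)=a+b\cdot\cost_i(\rho)$ for every play $\rho$ with $\first(\rho)=v$. This is a case analysis on the four cost types. For discounted-price, prefixing by $h=h_0\ldots h_k$ gives $\DPMin(h\rho)=(1-\lambda)\price\text{-prefix} + \lambda^k\,\DPMin(\rho)$, so $b=\lambda^k>0$ and $a$ is the accumulated discounted prefix-price. For average-price (and more generally price-per-reward-average), prepending a finite prefix changes neither the $\limsup$ of the running average nor its value, since the bounded prefix contribution is washed out in the limit; hence $a=0$, $b=1$. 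For reachability-price the argument splits on whether $h$ already meets the goal set $\FMin$: if it does, $\cost_i(h\rho)$ is the fixed prefix-price (so $b=0$—note $b=0\in\IR^+$ is permitted, and one may equally take $b=1$ with the appropriate constant), and otherwise $\cost_i(h\rho)=\price(h)+\cost_i(\rho)$, giving $a=\price(h)$ and $b=1$. In every case the required affine relation holds.

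\medskip

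\noindent The main obstacle is mostly bookkeeping rather than conceptual depth: one must be careful that the $\costMax^i$ chosen for coalition-determinacy is the \emph{correct} companion function so that $\mathcal{G}^i$ literally matches one of the game types of Definition~\ref{def:part cost games} (and thereby inherits Theorem~\ref{theo:determined}), and that the inequality $\cost_i\ge\costMax^i$ genuinely holds—the only nontrivial check being the $\limsup/\liminf$ comparison in the averaging cases. Once both hypotheses are in place, Theorem~\ref{theo:mem lin} applies verbatim and yields the Nash equilibrium of memory at most $|V|+|\Pi|$, completing the proof.
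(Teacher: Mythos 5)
Your proposal is correct and takes essentially the same route as the paper, which likewise proves (Proposition~\ref{prop:ex nice games} in the appendix) that such games are \propcost\ and \propcoalpos---choosing $\costMax^i$ as the Max-companion of $\cost_i$ from Definition~\ref{def:part cost games}, invoking Theorem~\ref{theo:determined}, checking $\cost_i \ge \costMax^i$ via the same $\limsup/\liminf$ observation, and performing the same four-case prefix analysis with the same constants $a$ and $b$---before concluding via Theorem~\ref{theo:mem lin}. One slip in an aside: in the reachability case where $h$ already meets $\FMin$, you cannot ``equally take $b=1$ with the appropriate constant,'' since $\cost_i(h\rho)$ is then a fixed constant while $\cost_i(\rho)$ varies over continuations (and may be $+\infty$), so $b=0$ is forced, together with the convention $0\cdot(+\infty)=0$ that the paper makes explicit.
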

Note that the existence of finite-memory Nash equilibria in cost games
with quantitative reachability objectives has already been established
in~\cite{BBD10,BBD12}. Even if not explicitly stated in the previous
papers, one can deduce from the proof
of~\cite[Lemma~16]{BBD12} that the provided Nash equilibrium has a
memory (at least) exponential in the size of the cost game. Thus,
Theorem~\ref{theo:exists EN PC} significantly improves the complexity
of the strategies constructed in the case of cost games with
quantitative reachability objectives.

\subsection{Combining Qualitative and Quantitative Objectives}
Multiplayer cost games allow to encode games combining both
qualitative and quantitative objectives, such as \emph{mean-payoff
  parity games}~\cite{CHJ05}. In our framework, where each player aims
at minimising his cost, the mean-payoff parity objective could be
encoded as follows: $\cost_i(\rho) = \APMin(\rho)$ if the parity
condition is satisfied, $+ \infty$ otherwise.

 The determinacy of mean-payoff parity games, together with the
 existence of optimal strategies (that could require infinite memory)
 have been proved in~\cite{CHJ05}. This result implies that
 multiplayer cost games with mean-payoff parity objectives are
 coalition-determined. Moreover, one can prove that such a game is
 also \propcost\ (by taking $a=0$ and $b=1$). By applying
 Theorem~\ref{theo:exists EN}, we obtain the existence of a Nash
 equilibrium for multiplayer cost games with mean-payoff parity
 objectives. As far as we know, this is the first result about the
 existence of a Nash equilibrium in cost games with mean-payoff parity
 games.

\begin{remark}
Let us emphasise that Theorem~\ref{theo:exists EN} applies to cost
games where the players have different kinds of cost functions (as in
Example~\ref{ex:cost game}). In particular, one player could have a
qualitative B\"uchi objective, a second player a discounted-price
objective, a third player a mean-payoff parity objective,\ldots
\end{remark}

\bibliographystyle{abbrv}
\bibliography{main}

\newpage
\appendix

\section*{Technical Appendix}
\pagenumbering{roman} 

\section{Example of a cost game which is not \propcost}
\begin{example}
Multiplayer cost games allow to encode energy games.  Let
$\mathcal{G}$ be a cost game defined by means of a price function
$\price: E \to \IR$, that assigns a price to each edge.  In our
framework, where each player aims at minimising his cost, an energy
objective~\cite{BFLMS08} (with threshold $T \in \IR$) could be encoded
as follows:
$$
\cost_i(\rho) =
\begin{cases}
\sup_{n \ge 0} \price(\rho[0,n]) &  \text{ if } \sup_{n \ge 0} 
\price(\rho[0,n]) \le T\\ 
{} \ + \infty & \text{ otherwise,}
\end{cases}
$$
with $\price(\rho[0,n])= \sum_{i=1}^n \price(\rho_{i-1},\rho_i)$.

  \begin{figure}[h!]
    \centering
    \begin{tikzpicture}[xscale=1,yscale=1,>=stealth',shorten >=1pt]
      \everymath{\scriptstyle}
 
      \path (0,0) node[draw,circle,inner sep=2pt] (qA) {$A$};
      \path (2.5,0) node[draw,circle,inner sep=2pt] (qB) {$B$};
     
      \path (qA) edge[loop above] node[pos=.5,above] {{+1}} (qA);
      \path (qB) edge[loop above] node[pos=.5,above] {{-1}} (qB);
      \path (qA) edge[->,bend left] node[pos=.5,above] {{+1}} (qB);
      \path (qB) edge[->,bend left] node[pos=.5,below] {{-1}} (qA);
    \end{tikzpicture}

    \caption{A cost game which is not \propcost}
    \label{fig-notpropcost} 
  \end{figure}

Let us consider the one-player cost game with an energy objective
(with threshold $T=2$) depicted in Figure~\ref{fig-notpropcost}. We
show that this game is not \propcost. For this, we exhibit a
history $hv \in \hist$ such that for all $a, \, b \in \IR$ there
exists a play $\rho \in \plays$ with $\first(\rho)=v$, such that
$\cost_1(h\rho) \ne a + b\cdot\cost_1(\rho)$. We in fact give a
play $\rho$ independent of $a$ and $b$.  Let $hv$ be the history
$AAABA$ and $\rho$ be the play $(AB)^\omega$. We have that
$\cost_1(\rho)=1$ and $\cost_1(h\rho)=\cost_1(AA(AB)^\omega)=+\infty$,
since $\sup_{n \ge 0} \price((h\rho)[0,n])=3$, which is above the
threshold $T=2$. It is thus impossible to find $a, \, b \in \IR$ such
that:
$$
+\infty =\cost_1(h\rho) = a + b\cdot\cost_1(\rho) = a + b.
$$
\end{example}


\section{Remark about secure and subgame perfect equilibria}

\begin{remark}
It would be tempting to try to prove the existence of \emph{subgame
  perfect equilibria} or \emph{secure equilibria}\footnote{The
  definitions of subgame perfect and secure equilibria in this context
  can be found in \cite{BBDG12}.} in multiplayer cost games with
techniques similar to the proof of Theorem~\ref{theo:exists
  EN}. However, our definition of the Nash equilibrium in the proof of
Theorem~\ref{theo:exists EN} is (in general) neither a subgame perfect
equilibrium, nor a secure equilibrium. To see this, let us consider
the following two cost games $\mathcal{G}$ and $\mathcal{H}$, whose
graphs are depicted on Figure~\ref{fig:G1} and~\ref{fig:G2}
respectively. Both games are initialised in vertex $A$.

  \begin{figure}
    \null\hfill
    \begin{minipage}[b]{0.32\linewidth}
      \begin{center}
        \begin{tikzpicture}[xscale=1,yscale=1]
          \everymath{\scriptstyle}
          
          \path (0,2) node[draw,rectangle,inner sep=3.5pt] (q0) {$A$};
          \path (-1,1) node[draw,circle,inner sep=2pt] (q00) {$B$};
          \path (1,1) node[draw,circle,inner sep=2pt] (q01) {$C$};
          \path (-1,0) node[draw,double,circle,inner
            sep=2pt,fill=black!20!white] (q000) {$D$};
          \path (.5,0) node[draw,double,circle,inner sep=2pt,
            fill=black!20!white] (q010) {$E$};
          \path (1.5,0) node[draw,circle,inner sep=2pt] (q011) {$F$};
          
          \draw[arrows=->] (q0) -- (q00) ;
          \draw[arrows=->] (q0) -- (q01) ;
          
          \draw[arrows=->] (q00) -- (q000) ; 
          \draw[arrows=->] (q01) -- (q010) ; 
          \draw[arrows=->] (q01) -- (q011) ;
        \end{tikzpicture}
        
        \caption{Game $\mathcal{G}$.}
        \label{fig:G1}
      \end{center}
    \end{minipage}
    \hfill
    \begin{minipage}[b]{0.32\linewidth}
      \begin{center}
        \begin{tikzpicture}[xscale=1,yscale=1]
          \everymath{\scriptstyle}
          



          \path (0,1) node[draw,circle,inner sep=2pt,
            fill=black!20!white] (q) {$A$};
          \path (0,0) node[draw,circle,inner sep=2pt] (q0) {$B$};
          \path (-1,-1) node[draw,rectangle,inner sep=3.5pt,
            fill=black!20!white,double]
          (q00) {$C$};
          \path (1,-1) node[draw,rectangle,inner sep=3.5pt]
          (q01) {$D$};

          \draw[arrows=->] (q) -- (q0) ;
          \draw[arrows=->] (q0) -- (q00) ;
          \draw[arrows=->] (q0) -- (q01) ;
          
        \end{tikzpicture}
        \caption{Game $\mathcal{H}$.}
        \label{fig:G2}
      \end{center}
    \end{minipage}    
    \hfill\null
  \end{figure}

The game $\mathcal{G}$ is a two-player cost game where the vertices of
player~1 (resp.~2) are represented by circles (resp. squares), that
is, $V_1=\{B,C,D,E,F\}$ and $V_2=\{A\}$. The cost functions of both
players are $\RPMin$, with\footnote{In both figures, shaded
  (resp. doubly circled) vertices represent the goal set $\FMin_1$
  (resp. $\FMin_2$).} $\FMin_1= \FMin_2= \{D,E\}$ and the price
function $\price:E \to \IR$ defined by $\price(e)=1$ for any edge $e
\in E$ (same price function for the two players). It means that both
players have reachability objectives and want to reach vertex $D$ or
$E$ within the least number of edges.


Let us study the two Min-Max cost games $\mathcal{G}^1$ and
$\mathcal{G}^2$. In the game $\mathcal{G}^1$, let $\sigmaopt_1$ be
defined as $\sigmaopt_1(C)=E$ and $\sigmaopt_{-1}$ be defined as
$\sigmaopt_{-1}(A)=C$. Then, $\sigmaopt_1$ and $\sigmaopt_{-1}$ are
positional optimal strategies for player Min (player~1) and player Max
(player~2) respectively. In the game $\mathcal{G}^2$, we define
$\sigmaopt_2$ and $\sigmaopt_{-2}$ as $\sigmaopt_2(A)=B$ and
$\sigmaopt_{-2}(C)=F$. These two strategies of $\mathcal{G}^2$ are
positional optimal strategies for player Min (player~2) and player Max
(player~1) respectively.

If we define a Nash equilibrium $(\tau_1,\tau_2)$ in $\mathcal{G}$
exactly as in the proof of Theorem~\ref{theo:exists EN}, depending
on these strategies $\sigmaopt_1$, $\sigmaopt_{-1}$, $\sigmaopt_2$ and
$\sigmaopt_{-2}$, then $(\tau_1,\tau_2)$ is not a subgame perfect
equilibrium in $\mathcal{G}$. Indeed, $(\tau_1|_A,\tau_2|_A)$ is not a
Nash equilibrium in the subgame $\mathcal{G}|_A$ with history $AC$:
player~1 punishes player~2 by choosing the edge $(C,F)$ (according to
$\sigmaopt_{-2}$) whereas player~1 could pay a smaller cost by
choosing the edge $(C,E)$.

Furthermore, this Nash equilibrium also gives a counter-example of
subgame perfect equilibrium for other classical punishments (see
\cite{OR94}, e.g., punish the last player who has deviated and only
for a finite number of steps).

Let us now consider the two-player cost game $\mathcal{H}$ where
$V_1=\{A,B\}$ and $V_2=\{C,D\}$ (see Figure~\ref{fig:G2}). The price
function and the cost functions of the two players are the same as in
the game $\mathcal{G}$, except that here $\FMin_1=\{A,C\}$ and
$\FMin_2=\{C\}$. Note that player~$2$ does not really play in
$\mathcal{H}$, only player~$1$ has a choice to make: he can choose the
edge $(B,C)$ or the edge $(B,D)$.

As before, we study the two Min-Max cost games $\mathcal{H}^1$ and
$\mathcal{H}^2$. Let $\sigmaopt_1$ be a positional strategy of
player~1 in $\mathcal{H}^1$ such that $\sigmaopt_1(B)=C$, and
$\sigmaopt_{-2}$ be a positional strategy of player~1 in
$\mathcal{H}^2$ such that $\sigmaopt_{-2}(B)=D$. These strategies are
optimal in the two respective games.  Then, we define a Nash
equilibrium in $\mathcal{H}$ in the same way as in the proof of
Theorem~\ref{theo:exists EN}, depending on $\sigmaopt_1$ and
$\sigmaopt_{-2}$. Actually, this is not a secure equilibrium in
$\mathcal{H}$ because player~1 can strictly increase player~2's cost
while keeping his own cost, by choosing the edge $(B,D)$ instead of
following $\sigmaopt_1$ ($\sigmaopt_1$ suggests to choose the edge
$(B,C)$).
\end{remark}


\section{Proof of Theorem~\ref{theo:mem lin}}

  Theorem~\ref{theo:mem lin} states that in every initialised
  multiplayer cost game that is \propcost\ and \propcoalposemph, there
  exists a Nash equilibrium with memory (at most) $|V|+|\Pi|$.

\begin{proof}
 Let $(\mathcal{G} = (\Pi, V, (V_i)_{i \in \Pi}, E, (\cost_i)_{i \in
   \Pi}),v_0)$ be an initialised multiplayer cost game that is
 \propcost\ and \propcoalposemph. For this proof, we keep the
 notations introduced in the proof of Theorem~\ref{theo:exists EN}. In
 particular, we consider the Nash equilibrium $(\tau_i)_{\ipi}$ as
 defined in the latter proof, whose outcome is $\rho:=\langle
 (\sigmaopt_i)_{i \in \Pi} \rangle_{v_0}$. We recall that for all $i
 \in \Pi$, the strategy $\tau_i$ depends on the strategies
 $\sigmaopt_i$ (optimal strategy in $\mathcal{G}^i$) and
 $\sigmaopt_{i,j}$ (derived from the optimal strategy $\sigmaopt_{-j}$
 in $\mathcal{G}^j$) for $j \in \Pi \setminus \{i\}$. As the game
 $\mathcal{G}$ is now \propcoalpos\ by hypothesis, these strategies
 are assumed to be positional. This proof consists in showing that
 $(\tau_i)_{\ipi}$ is a strategy profile with memory (at most)
 $|V|+|\Pi|$.

 For this purpose, we define a finite strategy automaton for each
 player that remembers the play $\rho$ and who has to be punished. As
 the play $\rho$ is the outcome of the positional strategy profile
 $(\sigmaopt_i)_{i \in \Pi}$, we can write $\rho:=v_0\ldots
 v_{k-1}(v_k\ldots v_n)^{\omega}$ where $0 \le k \le n \le |V|$, $v_l
 \in V$ for all $0 \le l \le n$ and these vertices are all
 different. For any $i \in \Pi$, let $\mathcal{A}_i=
 (M,m_0,V,\delta,\nu)$ be the strategy automaton of player~$i$, where:
  \begin{itemize}
  \item $M = \{v_0v_0,v_0v_1,\ldots,v_{n-1}v_n,v_nv_k\} \cup \Pi
    \setminus \{i\}$.

    As we want to be sure that the play $\rho$ is followed by all
    players, we need to memorise which movement (edge) has to be
    chosen at each step of $\rho$. This is the role of
    $\{v_0v_0,v_0v_1,\ldots,v_{n-1}v_n,v_nv_k\}$. But in case a player
    deviates from $\rho$, we only have to remember this player during
    the rest of the play (no matter if another player later deviates
    from $\rho$). This is the role of $\Pi \setminus \{i\}$.
  \item $m_0=v_0v_0$ (this memory state means that the play has not
    begun yet).
  \item $\delta: M \times V \to M$ is defined in this way: given $m
    \in M$ and $v \in V$,
    \[ \delta(m,v) := 
    \left\{ \begin{array}{ll}
      
      j & \mbox{ if $m=j \in \Pi$ or}\\

      & \mbox{ \quad ($m=u_1u_2$, with $u_1,u_2 \in V$,
        $v\not=u_2$ and $u_1 \in V_j$),}\\
      
      v_lv_{l+1} & \mbox{ if $m=uv_l$ for a certain $l \in
        \{0,\ldots,n-1\}$, $u\in V$,}\\
      
      & \mbox{ \quad  and $v=v_l$,} \\
      
      v_nv_{k} & \mbox{ otherwise ($m=uv_n$ and $v=v_n$).}

    \end{array}\right.\]
    Intuitively, $m$ represents either a player to punish, or the edge
    that should, if following $\rho$, have been chosen at the last
    step of the current stage of the play, and $v$ is the real last
    vertex of the current stage of the play.

    Notice that in this definition of $\delta$, $j$ is different from
    $i$ because if player~$i$ follows the strategy computed by this
    strategy automaton, one can be convinced that he does not deviate
    from the play $\rho$.

    \item $\nu: M \times V_i \to V$ is defined in this way: given $m
      \in M$ and $v \in V_i$, 
      \[ \nu(m,v) := \left\{ \begin{array}{ll}
      
      \sigmaopt_i(v) & \mbox{ if $m=u_1u_2$ with $u_1,u_2 \in V$ and
      $v=u_2$,}\\
      
      \sigmaopt_{i,j}(v) & \mbox{ if $m=j \in \Pi$ or}\\
      
      & \mbox{\quad ($m=u_1u_2$, with $u_1,u_2 \in V$, $v\not=u_2$ and
        $u_1 \in V_j$).}

    \end{array}\right.\]
    The idea is to play according to $\sigmaopt_i$ if everybody
    follows the play $\rho$, and switch to $\sigmaopt_{i,j}$ if
    player~$j$ is the first player who has deviated from $\rho$.
  \end{itemize}
  Obviously, the strategy $\sigma_{\mathcal{A}_i}$ computed by
  the strategy automaton ${\mathcal{A}_i}$ exactly corresponds to the
  strategy $\tau_i$ of the Nash equilibrium. And so, we can conclude
  that each strategy $\tau_i$ requires a memory of size at most $|M|
  \le |\Pi| + |V|$. \qed
\end{proof}


\section{Example~\ref{ex:cost game} continued}

\begin{example}\label{exend:cost game}
Thanks to the proof of Theorem~\ref{theo:mem lin}, we can construct a
finite strategy automaton $\mathcal{A}_1$ that computes the
strategy~$\tau_1$ of player~1 given in Example~\ref{excont2:cost
  game}. The set $M$ of memory states is $M=\{AA,AB,BC,CB\} \cup
\{2\}$ since $\rho=A(BC)^\omega$, and the initial state is
$m_0=AA$. The memory update function $\delta: M \times V \to M$ and
the transition choice function $\nu: M \times V_1 \to V$ are depicted
in Figure~\ref{fig:strat aut}: a label $v/v'$ on an edge $(m_1,m_2)$
means that $\delta(m_1,v)=m_2$, and $\nu(m_1,v)=v'$ if $v \in V_1$.
If $v \notin V_1$, we indicate that $\nu$ does not return any advice
by a `$-$', and label the edge with $v/-$.

 \begin{figure}[h!]
    \centering
    \begin{tikzpicture}[xscale=.9,yscale=.7,>=stealth',shorten >=1pt]
      \everymath{\scriptstyle}

      \path (0,0) node[draw,circle,inner sep=2pt] (qA) {$AA$};
      \path (2.5,0) node[draw,circle,inner sep=2pt] (qB) {$AB$};
      \path (5,0) node[draw,circle,inner sep=2pt] (qC) {$BC$};
      \path (7.5,1) node[draw,circle,inner sep=2pt] (qD) {$CB$};
      \path (7.5,-1) node[draw,circle,inner sep=4pt] (qE) {$2$};
      
      \path (qA) edge[->] node[pos=.5,above] {{$A/B$}} (qB);
      \path (qB) edge[->] node[pos=.5,above] {{$B/-$}} (qC);

      \path (qC) edge[->,bend right] node[pos=.6,below] {{$C/B$}} (qD);
      \path (qD) edge[->,bend right] node[pos=.5,above] {{$B/-$}} (qC);

      \path (qC) edge[->,bend right] node[pos=.5,below] {{$A/D$}} (qE);
      \path (qE) edge[->,loop right] node[pos=.6,below] {{$A/D$}} (qE);

    \end{tikzpicture} 
    \caption{The finite strategy automaton $\mathcal{A}_1$.}  
    \label{fig:strat aut} 
  \end{figure}
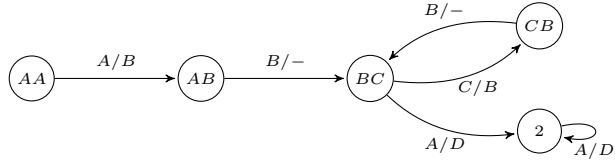

\end{example}


\section{Sketch of proof of Theorem~\ref{theo:exists EN gen}}

Theorem~\ref{theo:exists EN gen} states that in every initialised
 multiplayer cost game that is \propcost\ and \emph{finite-memory}
 coalition-determined, there exists a Nash equilibrium with finite
 memory.

\begin{proof}[Sketch]
 The proof follows the same philosophy than the proof of
 Theorem~\ref{theo:mem lin} and keeps the same notations. Again we
 consider the Nash equilibrium $(\tau_i)_{\ipi}$ defined in the proof
 of Theorem~\ref{theo:exists EN}, whose outcome is $\rho:=\langle
 (\sigmaopt_i)_{i \in \Pi} \rangle_{v_0}$. We recall that for all
 $i \in \Pi$, the strategy $\tau_i$ depends on the strategies
 $\sigmaopt_i$ and $\sigmaopt_{i,j}$ for
 $j \in \Pi \setminus \{i\}$. As the game $\mathcal{G}$ is
 finite-memory coalition-determined by hypothesis, these strategies
 are assumed to be finite-memory. Given $i \in \Pi$ and
 $j \in \Pi \setminus \{i\}$, we denote by $\mathcal{A}^{\sigmaopt_i}$
 (resp. $\mathcal{A}^{\sigmaopt_{i,j}}$) a finite strategy automaton
 for the strategy $\sigmaopt_i$ (resp.  $\sigmaopt_{i,j}$).

 As in the proof of Theorem~\ref{theo:mem lin}, each player needs to
 remember both the play $\rho$ and who has to be punished.  But here
 the play $\rho$ is not anymore the outcome of a positional strategy
 profile: each $\sigmaopt_i$ is a finite-memory
 strategy. Nevertheless, in some sense, we can see the $\sigmaopt_i$'s
 as positional strategies played on the product graph
 $G \times \mathcal{A}^{\sigmaopt_1} \times \cdots \times \mathcal{A}^{\sigmaopt_{|\Pi|}}$. This
 allows us to write $\rho:=v_0\ldots v_{k-1}(v_k\ldots v_n)^{\omega}$
 where\footnote{$|\mathcal{A}|$ denotes the number of states of the
 automaton ${A}$.} $0 \le k \le n \le |V| \cdot \prod_{j\in \Pi}
 |\mathcal{A}^{\sigmaopt_{j}}|$, $v_l \in V$ for all $0 \le l \le
 n$. Like in the proof of Theorem~\ref{theo:mem lin}, we can now
 define, for any $i \in \Pi$, $\mathcal{A}^{\tau_i}$, a finite
 strategy automaton for $\tau_i$. In order to build explicitly
 $\mathcal{A}^{\tau_i}$, we need to take into account, on one hand,
 the path $\rho$, and on the other hand, the memory of the punishing
 strategies $\sigmaopt_{i,j}$. This enables to bound the size of
 $\mathcal{A}^{\tau_i}$ by $ |V| \cdot
\prod_{j\in \Pi} |\mathcal{A}^{\sigmaopt_{j}}|
 + \sum_{j\in \Pi \setminus \{i\}} |\mathcal{A}^{\sigmaopt_{i,j}}|$.
\qed
\end{proof}


\section{Remark on the particular Min-Max cost games of Definition~\ref{def:part cost games}}

\begin{remark}
Note that reachability-price and discounted-price games are
zero-sum\footnote{Let us recall that a Min-Max cost game is zero-sum
  if and only if $\costMin = \costMax$.}  games, whereas the two other
ones are not. For example, let us consider the average-price game
$\mathcal{G}$ depicted on Figure~\ref{fig:AP game}. The vertices of
this game are $A$ and $B$, and the number $0$ or $1$ associated to an
edge corresponds with the price of this edge ($\price(A,B)=
\price(B,B)=1$ and the price of the other edges is zero).
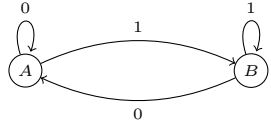
\begin{figure}
  \centering
  \begin{tikzpicture}[xscale=1.5,yscale=1.2]
    \everymath{\scriptstyle}
    
    \path (0,0) node[draw,circle,inner sep=2pt] (q0) {$A$};
    \path (2,0) node[draw,circle,inner sep=2pt] (q00) {$B$};
     
    \path (q0) edge[->,bend left] node[pos=.5,above] {{$1$}} (q00);
    \path (q00) edge[->,bend left] node[pos=.5,below] {{$0$}} (q0);
    \path (q0) edge[->,loop above] node[pos=.5,above] {{$0$}} (q0);
    \path (q00) edge[->,loop above] node[pos=.5,above] {{$1$}} (q00);
      
  \end{tikzpicture}
  
  \caption{Average-price game $\mathcal{G}$.}
  \label{fig:AP game}
\end{figure}

Let $\rho$ be the play $ABAB^2A^2B^4A^4\ldots B^{2^n}A^{2^n}\ldots$,
where $A^i$ means the concatenation of $i$ $A$. Then the sequence of
prices appearing along $\rho$ is
$101^20^21^40^4\ldots1^{2^n}0^{2^n}\ldots$, and so we get:
$\APMin(\rho)=\frac{2}{3}$ and $\APMax(\rho)=\frac{1}{2}$. As these
costs are not equal, the average-price game $\mathcal{G}$ depicted on
Figure~\ref{fig:AP game} is not a zero-sum game. Since an
average-price game is a special case of price-per-reward-average game,
we can conclude that these two kinds of games are non zero-sum games.
\end{remark}


\section{Part of the proof of Theorem~\ref{theo:exists EN PC}}\label{sec:proof CPL}

\begin{proposition}\label{prop:ex nice games}
  Let $\mathcal{G} = (\Pi, V, (V_i)_{i \in \Pi}, E, (\cost_i)_{i \in
    \Pi})$ be a multiplayer cost game where the cost function
  $\cost_i$ belongs to $\{\RPMin,\DPMin,\APMin,\PRAvgMin\}$ for each
  $i \in \Pi$. Then the game~$\mathcal{G}$ is \propcost\ and
  \propcoalpos.
\end{proposition}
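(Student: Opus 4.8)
The plan is to prove the two properties separately, exploiting that each is stated \emph{player by player}: Definition~\ref{def:prefix} quantifies over each player~$i$ independently, and Definition~\ref{def:pos coal det} only asks for a gain function $\costMax^i$ and a determined game $\mathcal{G}^i$ for each~$i$ in isolation. Hence it suffices to fix a single player~$i$ and argue by cases on which of the four functions $\cost_i$ equals. Consistently with the inequality ``$b \ge 0$'' used in the proof of Theorem~\ref{theo:exists EN}, I read $\IR^+$ as $[0,+\infty)$, so $b=0$ is permitted.

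Establishing \propcoalpos\ is the easy direction. For each player~$i$ I would take $\costMax^i$ to be the canonical Max-counterpart of $\cost_i$: namely $\RPMax=\RPMin$ if $\cost_i=\RPMin$, $\DPMax(\lambda)=\DPMin(\lambda)$ if $\cost_i=\DPMin(\lambda)$, and the $\liminf$-versions $\APMax$ or $\PRAvgMax$ if $\cost_i$ is $\APMin$ or $\PRAvgMin$. With this choice the Min-Max cost game $\mathcal{G}^i=(V,V_i,V\setminus V_i,E,\cost_i,\costMax^i)$ is literally one of the four games of Definition~\ref{def:part cost games}, so Theorem~\ref{theo:determined} hands us determinacy together with positional optimal strategies for both players. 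The required inequality $\cost_i\ge\costMax^i$ is immediate: for reachability and discounted prices the Min and Max functions coincide, while for the two average-type functions it is exactly $\limsup\ge\liminf$ computed on the same price (and reward) sequence.

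For \propcost\ I would fix a history $hv\in\hist$, write $P$ for the total prefix price $\price(hv)$ and $m$ for the number of edges of that prefix, and compute $\cost_i(h\rho)$ as a function of $\cost_i(\rho)$ for an arbitrary $\rho$ with $\first(\rho)=v$. For $\APMin$ and $\PRAvgMin$ the additive constant $P$ (and the constant prefix reward) is asymptotically negligible --- in the mean-payoff case because $P/N\to 0$ and the index shift does not affect the $\limsup$ of the running average, and in the price-per-reward case because the diverging reward denominator swallows the constants --- so both give $a=0$, $b=1$. For $\DPMin(\lambda)$, splitting the discounted sum at the end of the prefix and factoring out $\lambda^{m}$ yields $\cost_i(h\rho)=a+\lambda^{m}\cost_i(\rho)$, i.e. $a$ the discounted prefix price and $b=\lambda^{m}\in\,\left]0,1\right[$.

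The reachability case $\RPMin$ is where the real care is needed, and I expect it to be the main obstacle. Here one must distinguish whether the goal set $\FMin$ is already met inside the prefix $hv$. If it is, then $\cost_i(h\rho)$ equals a constant~$c$ (the price up to the first goal) for \emph{every} $\rho$; since $\cost_i(\rho)$ is not constant, this forces the choice $a=c$, $b=0$. If no goal occurs in $hv$, then the first goal of $h\rho$ lies in the $\rho$-part and occurs at the very same index as the first goal of $\rho$, so the prices line up as $\cost_i(h\rho)=P+\cost_i(\rho)$, giving $a=P$, $b=1$; this identity must be checked to remain valid under the $+\infty$ convention when $\rho$ never reaches $\FMin$. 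Collecting the four cases establishes \propcost, and together with the previous paragraph the proposition follows.
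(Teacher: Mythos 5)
Your proposal is correct and takes essentially the same route as the paper's own proof: the same player-by-player case analysis with identical choices of $a$ and $b$ in all four cases (including $b=0\in\IR^+$, implicitly using the convention $0\cdot(+\infty)=0$, when the goal set is already met inside the prefix for $\RPMin$), and the same choice of $\costMax^i$ as the canonical Max-counterpart from Definition~\ref{def:part cost games} so that $\cost_i\ge\costMax^i$ and Theorem~\ref{theo:determined} delivers determinacy with positional optimal strategies. The only difference is level of detail: where you assert that the prefix is asymptotically negligible, the paper explicitly bounds the difference of running averages for $\APMin$ and carries out the $\limsup$ manipulations for $\PRAvgMin$ using the diverging reward.
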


\begin{proof}
  Let $\mathcal{G}$ be a a multiplayer cost game where each cost
  function is $\RPMin$, $\DPMin$, $\APMin$ or $\PRAvgMin$. Let us
  first prove that the game $\mathcal{G}$ is \propcost. Given $j \in
  \Pi$, $v \in V$ and $hv \in \hist$, we consider the four possible
  cases for $\cost_j$. Let $\price: E \to \IR$ be a price function and
  $\reward: E \to \IR$ be a diverging reward function.  For the sake
  of simplicity, we write $hv:= h_0\ldots h_k$ with $k \in \IN$,
  $h_k=v$ and $h_l \in V$ for $l=0,\ldots,k$. Moreover, to avoid heavy
  notation, we do not explicitly show the dependency between $\FMin$
  and $j$ in the first case or between $\lambda$ and $j$ in the
  second case.
  \begin{enumerate} 
  
  \item Case $\cost_j=\RPMin$ for a given goal set $\FMin \subseteq
    V$:

    Let us distinguish two situations. If there exists $l \in
    \{0,\ldots,k\}$ such that $h_l \in \FMin$, then we set $a:=
    \sum_{i=1}^{n} \price(h_{i-1},h_i) \in \IR$ and $b:= 0 \in \IR^+$,
    where $n$ is the least index such that $h_n \in \FMin$. Let $\rho$
    be a play with $\first(\rho)=v$, then it implies that
    $\RPMin(h\rho)= \sum_{i=1}^{n} \price(h_{i-1},h_i) = a + b\cdot
    \RPMin(\rho)$ (with the convention that $0 \cdot +\infty=0$).

    If there does not exist $l \in \{0,\ldots,k\}$ such that $h_l \in
    \FMin$, then we set $a:= \sum_{i=1}^{k} \price(h_{i-1},h_i) \in
    \IR$ and $b:= 1 \in \IR^+$. Let $\rho= \rho_0\rho_1\ldots$ be a
    play such that $\first(\rho)=v$.  If $\RPMin(\rho)$ is infinite,
    then $\RPMin(h\rho)=+\infty = a + b \cdot \RPMin(\rho)$.
    Otherwise, if $n$ is the least index in $\IN$ such that $\rho_n
    \in \FMin$, then we have that:
    \[ \begin{array}{ll} 
      \RPMin(h\rho) & = \displaystyle \sum_{i=1}^{k}
      \price(h_{i-1},h_i) + \displaystyle \sum_{i=1}^n
      \price(\rho_{i-1},\rho_i)\\
      & = a + b \cdot \RPMin(\rho).
    \end{array}\]

    \vspace{0.5ex}
  \item Case $\cost_j=\DPMin(\lambda)$ for a given discount factor
    $\lambda \in\, \left]0,1\right[$:
    
    We set $a:= (1-\lambda) \sum_{i=1}^{k} \lambda^{i-1}
    \price(h_{i-1},h_i) \in \IR$ and $b:= \lambda^k \in \IR^+$. Given
    a play $\rho=\rho_0\rho_1\ldots$ such that $\first(\rho)=v$ and
    $\eta:=h\rho \in \plays$ (with $\eta=\eta_0\eta_1\ldots$), we have
    that:
    \[ \begin{array}{ll} 

      \DPMin(\lambda)(h\rho) & = \DPMin(\lambda)(\eta) \\

      & = (1-\lambda)\displaystyle \sum_{i=1}^{+\infty} \lambda^{i-1}
      \price(\eta_{i-1},\eta_i)\\

      & = (1-\lambda)\displaystyle \sum_{i=1}^{k} \lambda^{i-1}
      \price(\eta_{i-1},\eta_i) + (1-\lambda)\displaystyle
      \sum_{i=k+1}^{+\infty} \lambda^{i-1}
      \price(\eta_{i-1},\eta_i)\\ 

      & = (1-\lambda)\displaystyle \sum_{i=1}^{k} \lambda^{i-1}
      \price(h_{i-1},h_i) + \lambda^k
      (1-\lambda)\displaystyle \sum_{i=1}^{+\infty} \lambda^{i-1}
      \price(\rho_{i-1},\rho_{i})\\

      & = a + b\cdot\DPMin(\lambda)(\rho)\,.
    \end{array} \]

    \vspace{0.5ex}
  \item Case $\cost_j=\APMin$:

    We set $a:=0 \in \IR$ and $b:= 1 \in \IR^+$. Given $\rho \in
    \plays$ such that $\first(\rho)=v$ and $\eta:=h\rho \in \plays$
    (with $\eta=\eta_0\eta_1\ldots$), we show that:
    $$\APMin(h\rho)=\APMin(\eta)=\APMin(\rho)\,.$$ If
    $\APMin(\eta)=\APMin(\rho)=+\infty$ or $-\infty$, the desired
    result obviously holds. Otherwise, let us set $x_n:=
    \frac{1}{n}\sum_{i=1}^n \price(\eta_{i-1},\eta_i)$ and $y_n:=
    \frac{1}{n}\sum_{i=1}^n \price(\rho_{i-1},\rho_i)$, for all $n \in
    \IN_0$. By properties of the limit superior and definition of the
    $\APMin$ function, it holds that:
    $$\limsup_{n \to +\infty} (x_n-y_n) \ge \APMin(\eta) -
    \APMin(\rho) \ge \liminf_{n \to +\infty} (x_n-y_n)\,.$$ It remains
    to prove that the sequence $(x_n-y_n)_{n \in \IN}$ converges to
    $0$. For all $n > k$, we have that:
    \[ \begin{array}{ll} 
    |x_n-y_n| & = \left| \frac{1}{n} \cdot \left( \sum_{i=1}^n
    \price(\eta_{i-1},\eta_i) - \sum_{i=k+1}^{k+n}
    \price(\eta_{i-1},\eta_{i})\right)\right|\\
    &  = \frac{1}{n}\cdot\left| \sum_{i=1}^k
    \price(\eta_{i-1},\eta_i) - \sum_{i=n+1}^{n+k}
    \price(\eta_{i-1},\eta_{i}) \right|.
    \end{array} \]
    As the absolute value is bounded independently of $n$ (let us
    remind that $E$ is finite), we can conclude that $(x_n-y_n)_{n \in
      \IN}$ converges to $0$, and so $\APMin(\eta) =\APMin(\rho)$.

    \vspace{0.5ex}
  \item Case $\cost_j=\PRAvgMin$:

    We set $a:=0 \in \IR$ and $b:= 1 \in \IR^+$. Given $\rho \in
    \plays$ such that $\first(\rho)=v$ and $\eta:=h\rho \in \plays$
    (with $\eta=\eta_0\eta_1\ldots$), we show that:
    $$\PRAvgMin(h\rho)=\PRAvgMin(\eta)=\PRAvgMin(\rho)\,.$$ Thanks to
    several properties of $\limsup$, we have that:
    \begin{align}
      \PRAvgMin(\rho) & = \displaystyle \limsup_{n \to +\infty}
      \frac{\sum_{i=1}^n \price(\rho_{i-1},\rho_i)}{\sum_{i=1}^n
        \reward(\rho_{i-1},\rho_i)}\notag\\
      & = \displaystyle \limsup_{n \to +\infty}
      \frac{\sum_{i=1}^n \price(\eta_{k+i-1},\eta_{k+i})}{\sum_{i=1}^n
        \reward(\eta_{k+i-1},\eta_{k+i})}\notag\\
      & = \displaystyle \limsup_{n \to +\infty} \frac{\sum_{i=1}^{n+k}
        \price(\eta_{i-1},\eta_{i}) - \sum_{i=1}^k
        \price(\eta_{i-1},\eta_{i})}{\sum_{i=1}^{n+k}
        \reward(\eta_{i-1},\eta_{i}) - \sum_{i=1}^k
        \reward(\eta_{i-1},\eta_{i})}\notag\\
      & = \displaystyle \limsup_{n \to +\infty} \frac{\sum_{i=1}^{n+k}
        \price(\eta_{i-1},\eta_{i})}{\sum_{i=1}^{n+k}
        \reward(\eta_{i-1},\eta_{i}) - \sum_{i=1}^k
        \reward(\eta_{i-1},\eta_{i})}\label{eq lim 1}\\
      & = \displaystyle \limsup_{n \to +\infty} \frac{\sum_{i=1}^{n+k}
        \price(\eta_{i-1},\eta_{i})}{\sum_{i=1}^{n+k}
        \reward(\eta_{i-1},\eta_{i})}\cdot  \frac{1}{1 - \frac{\sum_{i=1}^k
        \reward(\eta_{i-1},\eta_{i})}{\sum_{i=1}^{n+k}
        \reward(\eta_{i-1},\eta_{i})}}\notag\\
      & = \displaystyle \limsup_{n \to +\infty} \frac{\sum_{i=1}^{n+k}
        \price(\eta_{i-1},\eta_{i})}{\sum_{i=1}^{n+k}
        \reward(\eta_{i-1},\eta_{i})}\label{eq lim 2}\\
      & = \displaystyle \limsup_{n \to +\infty} \frac{\sum_{i=1}^{n}
        \price(\eta_{i-1},\eta_{i})}{\sum_{i=1}^{n}
        \reward(\eta_{i-1},\eta_{i})}\notag\\
      & = \PRAvgMin(\eta) = \PRAvgMin(h\rho)\,.\notag
    \end{align} 
    Line~\eqref{eq lim 1} comes from the fact that the reward function
    $\reward$ is diverging, and from the following property: if
    $\lim_{n \to +\infty} b_n = b \in \IR$, then $\limsup_{n \to
      +\infty} (a_n + b_n) = (\limsup_{n \to +\infty} a_n) +b$.
    Line~\eqref{eq lim 2} is implied by this property: if $\lim_{n \to
      +\infty} b_n = b > 0$, then $\limsup_{n \to +\infty} (a_n \cdot
    b_n) = (\limsup_{n \to +\infty} a_n) \cdot b$.
  \end{enumerate}
   Note that, if the history $h$ is empty, then $k=0$ and, in all
  cases, $a$ is equal to $0$ and $b$ to 1. This actually implies that
  $\cost_i(h\rho)=\cost_i(\rho)$ holds.
 
  Let us now prove that the game $\mathcal{G}$ is \propcoalpos. Given
  a player $i \in \Pi$, if $\cost_i=\RPMin$, then we take
  $\costMax^i=\RPMax$. We do the same for the other cases by defining
  the gain function $\costMax^i$ for the coalition as the counterpart of
  $\cost_i$ in Definition~\ref{def:part cost games}.  Clearly, it
  holds that $\cost_i \ge \costMax^i$. Moreover, the Min-Max cost game
  $\mathcal{G}^i=(V,V_i,V \setminus V_i,E,\cost_i,\costMax)$ is
  determined and has positional optimal strategies by
  Theorem~\ref{theo:determined}. \qed
\end{proof}

\end{document}